\newtheorem{thm}{Theorem}
\newtheorem{lem}{Lemma}
\newtheorem{prb}{Problem}
\newcommand{\ie}{{\it i.e.},\ }
\newcommand{\BS}[1]{{\color{black} #1}}
\newcommand{\JG}[1]{{\color{black} #1}}
\newcommand{\BSg}[1]{{\color{black} #1}}
\newcommand{\removelatexerror}{\let\@latex@error\@gobble}
\begin{document}
\title{Optimal Geographical Caching in Heterogeneous Cellular Networks}
\author{Berksan Serbetci, and Jasper Goseling
\thanks{A part of this work~\cite{wcncpaper} has been presented at the IEEE Wireless Communications and Networking Conference 2017, held at San Francisco, CA, USA.}
\thanks{B. Serbetci was with the Stochastic Operations Research, University of Twente, Enschede, 7522 NB, The Netherlands. He is now with the Communication Systems Department, EURECOM, 06410, Biot, France (e-mail: berksan.serbetci@eurecom.fr).}
\thanks{J. Goseling is with the Stochastic Operations Research, University of Twente, Enschede, 7522 NB, The Netherlands. (e-mail: j.goseling@utwente.nl).}
}
\maketitle
\begin{abstract}
We investigate optimal geographical caching in heterogeneous cellular networks, where different types of base stations (BSs) have different cache capacities. The content library contains files with different popularities. The performance metric is the total hit probability. 

The problem of optimally placing content in all BSs jointly is not convex in general. However, we show that when BSs are deployed according to homogeneous Poisson point processes (PPP), independently for each type, we can formulate the problem as a convex problem. We give the optimal solution to the joint problem for PPP deployment. For the general case, we provide a distributed local optimization algorithm (LOA) that finds the optimal placement policies for different types of BSs. We find the optimal placement policy of the small BSs (SBSs) depending on the placement policy of the macro BSs (MBSs). We show that storing the most popular content in the MBSs is almost optimal if the SBSs are \JG{using an optimal} placement policy. Also, for the SBSs no such heuristic can be used; the optimal placement is significantly better than storing the most popular content. Finally, we numerically verify that LOA gives the same hit probability as the joint optimal solution for the PPP model.
\end{abstract}
\section{Introduction}
In recent years, there is extreme growth in data traffic over cellular networks. The growth rate of the demand is expected to increase in the upcoming years~\cite{cisco} such that current network infrastructures will not be able to support this data traffic~\cite{femtocells}. In order to tackle this problem, an obvious approach is to increase the number of base stations. These base stations require a high-speed backhaul to make this system work properly and it is costly to connect every base station to the core network in real life. A solution to this problem is to reduce backhaul traffic by reserving some storage capacity at both macro base stations (MBSs) and small base stations (SBSs) and use these as caches~\cite{femtod2d}. In this way, part of the data is stored at the wireless edge and the backhaul is used only to refresh this stored data. Data replacement will depend on the users' demand distribution over time. Since this distribution is varying slowly, the stored data can be refreshed at off-peak times. In this way, caches containing popular content changing over time depending on the demand will serve as helpers to the overall system and decrease the backhaul traffic.

Recently, there has been growing interest in caching in cellular networks.
In~\cite{femtocaching} Shanmugam \emph{et al.} focus on the content placement problem and analyze which files should be cached by which helpers for the given network topology and file popularity distribution by minimizing the expected total file delay.
In~\cite{approximation} Poularakis \emph{et al.} provide an approximation algorithm for the problem of minimizing the user content requests routed to macrocell base stations with constrained cache storage and bandwidth capacities.
In~\cite{optimalgeographic} B{\l}aszczyszyn \emph{et al.} revisit the optimal content placement in cellular caches by assuming a known distribution of the coverage number and provide the optimal probabilistic placement policy which guarantees maximal total hit probability.
In~\cite{fundamental} Maddah-Ali \emph{et al.} developed an information-theoretic lower bound for the caching system for local and global caching gains. In~\cite{distrcach} Ioannidis \emph{et al.} propose a novel mechanism for determining the caching policy of each mobile user that maximize the system's social welfare.
In~\cite{exploiting} Poularakis \emph{et al.} consider the content storage problem of encoded versions of the content files. In~\cite{cacheenabled} Bastug \emph{et al.} couple the caching problem with the physical layer. In~\cite{diststorage} Altman \emph{et al.} compare the expected cost of obtaining the complete data under uncoded and coded data allocation strategies for caching problem. Cache placement with the help of stochastic geometry and optimizing the allocation of storage capacity among files in order to minimize the cache miss probability problem is presented by Avrachenkov \emph{et al.} in~\cite{optimizationofcaching}. A combined caching scheme where part of the available cache space is reserved for caching the most popular content in every small base station, while the remaining is used for cooperatively caching different partitions of the less popular content in different small base stations, as a means to increase local content diversity is proposed by Chen \emph{et al.} in~\cite{cooperativecaching}. In~\cite{utility} Dehghan \emph{et al.} associate with each content a utility, which is a function of the corresponding content hit probabilities and propose utility-driven caching, where they formulate an optimization problem to maximize the sum of utilities over all contents.

The \emph{main contribution} of this paper is to find optimal placement strategies that maximize total hit probability in heterogeneous cellular networks. \BS{Different from~\cite{optimalgeographic, optimizationofcaching} our focus is on heterogeneous cellular networks in which an operator wants to jointly optimize the cached content in macro base stations (MBSs) and small base stations (SBSs) with different storage capacities and different coverage radii.} This problem is not convex in general conditions. We show that it is possible to reformulate the problem and make it convex when base stations are deployed according to homogeneous Poisson point processes (PPP). \BS{To our knowledge, this is the first paper that provides an optimal solution to the optimization of heterogeneous (and multi-tier) caching devices. We show that optimal placement strategies of the base stations can be flexibly distributed over different types if the sum of the file placement probabilities times the density parameters of the base stations satisfy a certain capacity constraint.} As the general problem is not convex, we provide a distributed local optimization algorithm (LOA) and optimize only one type of cache (e.g. SBS) using the information coming from other types of caches (e.g. MBS and other SBSs with different cache capacities) at each iteration step. We numerically verify that for PPP deployment scenario, LOA converges to the optimal hit probability that is found by solving the joint convex optimization problem after one round. We also illustrate with numerical examples how LOA performs for non-PPP deployment scenarios.

For several configurations we show that whether MBSs use the optimal deployment strategy or store ``the most popular content", has no impact on the total hit probability after deploying the SBSs with optimal content placement policies. We show that it is crucial to optimize the content placement strategy of the SBSs in order to maximize the overall performance. We show that heuristic policies like storing the popular content that is not yet available in the MBSs result in significant performance penalties.

The placement strategies that are proposed in this paper are probabilistic in nature. Therefore, they provide a very low-complexity solution to content placement in large networks. In~\cite{gibbsian} and~\cite{sigmetrics} non-probabilistic strategies are proposed that take into account exact base station locations and the overlap in coverage regions. These strategies will result in higher hit probabilities, but come at significantly larger complexity.

The remainder of this paper is organized as follows. In Section~\ref{model} we start the paper with model and problem definition. In Section~\ref{sec:jointoptimization} we present the joint optimal placement strategy problem for the PPP model and give required tools to solve it. In Section~\ref{sec:distributedoptimization} we provide a distributed local optimization algorithm for the general non-convex joint optimization problem. In Section~\ref{sec:performance} we continue with performance evaluation of the optimal placement strategies for different probabilistic deployment scenarios. In Section~\ref{discussion} we conclude the paper with discussions.

\section{Model and Problem Definition}
\label{model}
In this section we will present the general model and the problem formulation. 

Throughout the paper we will be interested in different types of base stations, namely MBSs and SBSs with different cache capacities. We will give the most general formulation of the problem as it is possible to have MBSs and SBSs with different storage capacities in some network topologies. 

We consider a heterogeneous cellular network with L-different types of base stations in the plane. These base stations are distributed according to a spatial point process~\cite{baccelli1}. \BS{Type$-\ell$, $\ell = 1,\dots, L$, base stations have coverage radius $r_\ell$}. Let $\mathcal{N}_\ell$ denote the number of base stations of type$-\ell$ that are covering a user at an arbitrary location in the plane. Furthermore, let $p_{\ell}(n_\ell) := \mathbb{P}[\mathcal{N}_\ell = n_\ell]$ denote the probability of a user being covered by $n_\ell$ type$-\ell$ caches, and $p(\bm{n}) := \mathbb{P}[\bm{\mathcal{N}} = \bm{n}]$ \BS{denote the joint probability of a user being covered by $n_\ell$ type$-\ell$ caches, $\forall \ell = 1,\dots,L$}, where $\bm{\mathcal{N}}=(\mathcal{N}_1,\dots, \mathcal{N}_L)$ and $\bm{n} = \left(n_1, \dots, n_L\right).$ We also define  $\bm{\mathcal{N}}^{(-\ell)} = (\mathcal{N}_i)_{i\neq\ell}$ and $\bm{n}^{(-\ell)} = (n_i)_{i\neq\ell}$ as the corresponding $(L-1)$-tuples excluding the $\ell$th component.

Caches store files from a content library $\mathcal{C} = \{c_1, c_2, \dots, c_J\}$, where an element $c_j$ is a file with normalized size $1$. The probability that file $c_j$ is requested is denoted as $a_j$.
Without loss of generality, $a_1 \geq a_2 \geq \dots \geq a_J$. 

Type$-\ell$ caches have cache memory size $K_\ell \geq 1$, $\ell = 1, \dots, L$.  We use the probabilistic content placement policy of~\cite{optimalgeographic}. We denote the probability that the content $c_j$ is stored at a type$-\ell$ cache as
\begin{equation}
b_j^{(\ell)} := \mathbb{P}\left(c_j \text{ stored in  type$-\ell$ cache}\right),
\end{equation}
and the placement strategy $\bm{b}^{(\ell)} = (b_1^{(\ell)}, \dots, b_J^{(\ell)})$  as a J-tuple for any type$-\ell$ cache. The content is independently placed in the cache memories according to the same distribution for the same type of caches. The placement procedure is as follows. The memory of a type$-\ell$ cache is divided into $K_\ell$ continuous memory intervals of unit length. Then $b_j^{(\ell)}$ values fill the cache memory sequentially and continue filling the next slot if not enough space is available in the memory slot that it has started filling in as in the end completely covering the $K_\ell$ memory intervals. Then, for any type$-\ell$ cache, a random number from the interval $[0,1]$ is picked and the intersecting $K_\ell$ files are cached. An example is shown in Figure~\ref{realization}.

Then the overall placement strategy for all types of caches can be denoted by $\bm{B} = \left[\bm{b}^{(1)}; \dots; \bm{b}^{(L)}\right]$ as a $L \times J$ matrix.
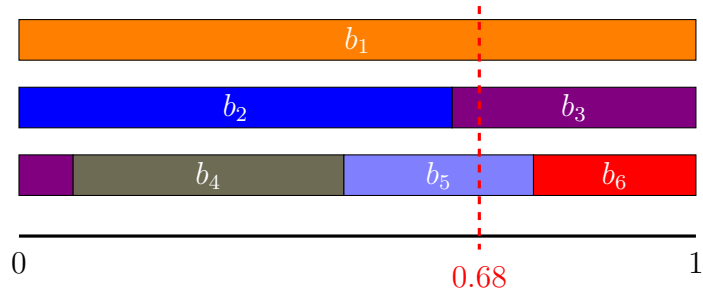
\begin{figure}
\centering
%
  \begin{tikzpicture}[scale=0.18]
\filldraw[draw=black, fill=orange] (0,0) rectangle (50,-3);
\filldraw[draw=black, fill=blue] (0,-5) rectangle (32,-8);
\filldraw[draw=black, fill=blue!50!red] (32,-5) rectangle (50,-8);
\filldraw[draw=black, fill=blue!50!red] (0,-10) rectangle (4,-13);
\filldraw[draw=black, fill=yellow!30!black] (4,-10) rectangle (24,-13);
\filldraw[draw=black, fill=blue!50!white] (24,-10) rectangle (38,-13);
\filldraw[draw=black, fill=red] (38,-10) rectangle (50,-13);
\draw[black,very thick] (0,-16) -- (50,-16);
\node[draw=none,inner sep=2pt,fill=none, label={[black]below:$0$}] at (0,-16) {};
\node[draw=none,inner sep=2pt,fill=none, label={[black]below:$1$}] at (50,-16) {};
\draw[red,very thick, dashed] (34,1) -- (34,-17);
\node[draw=none,inner sep=2pt,fill=none, label={[red]below:$0.68$}] at (34,-17) {};
\node[draw=none,inner sep=2pt,fill=none, label={[white]center: $b_1$}] at (25,-1.5) {};
\node[draw=none,inner sep=2pt,fill=none, label={[white]center: $b_2$}] at (16,-6.5) {};
\node[draw=none,inner sep=2pt,fill=none, label={[white]center: $b_3$}] at (41,-6.5) {};
\node[draw=none,inner sep=2pt,fill=none, label={[white]center: $b_4$}] at (14,-11.5) {};
\node[draw=none,inner sep=2pt,fill=none, label={[white]center: $b_5$}] at (31,-11.5) {};
\node[draw=none,inner sep=2pt,fill=none, label={[white]center: $b_6$}] at (44,-11.5) {};
  \end{tikzpicture}
\caption{A realization of the probabilistic placement policy ($J = 6$ and $K_\ell = 3$). A random number is picked (0.68 in this case) and the vertical line intersects with the optimal $\bm{b}^{(\ell)}$ values. From this figure, we conclude that the content subset $\{c_1, c_3, c_5\}$ will be cached.}
\label{realization}
\end{figure}

Next, we introduce our performance metric. The performance metric is the total miss probability ($1$-minus-hit probability) for the users located in the plane and is given by
\begin{equation}
f\left(\mathbf{B}\right) = \sum_{j = 1}^J a_j \sum_{n_1=0}^{\infty} \dots \sum_{n_L=0}^{\infty}  p(\bm{n}) \prod_{\ell = 1}^L (1 - b_j^{(\ell)})^{n_\ell}.
\label{missprob}
\end{equation}

We define the optimization problem to find the optimal placement strategy minimizing the total miss probability as follows:
\begin{prb}
\label{orgprb}
\begin{align}
&\min \text{ } f\left(\mathbf{B}\right)\nonumber\\
&\text{ }\mathbf{s.t.}\quad  b_1^{(\ell)} + \dots + b_J^{(\ell)} = K_\ell, \quad b_j^{(\ell)} \in [0,1],\quad \forall j, \ell. \label{constraints}
\end{align}
\end{prb}

\section{Deployment models and file popularities}
In this section we present specific cache deployment models and file popularity distributions that will be used in this paper.
\newpage
\subsection{Deployment models}
\subsubsection{Homogeneous PPP deployment model}
\BS{In this model we assume that a user at an arbitrary location in the plane can connect to all type$-\ell$ caches that are within radius $r_\ell$. The caches follow a two-dimensional (2D) spatial homogeneous Poisson process with type$-\ell$ caches independently distributed in the plane with density $\lambda_\ell > 0$ where $\ell = 1, \dots, L$. Type$-\ell$ caches within radius $r_\ell$ follows a Poisson distribution with parameter $t_\ell = \lambda_\ell \pi r_\ell^2$. Then, we conclude that
\begin{align}
p_{\ell}(n_\ell) &= P(\text{$n_\ell$ type$-\ell$ caches within radius $r_\ell$})\nonumber\\
&= \frac{t_\ell^{n_\ell}}{n_\ell!}e^{-t_\ell}. \label{poisdist}
\end{align}

The user is covered by $n_\ell$ type$-\ell$ caches and distributions of the different types of caches are independent of each other. Therefore, the total coverage distribution probability mass function $p_{\bm{n}}$ will be the product of individual probability distributions
\begin{equation}
\label{PPPcoverage}
p\left(\bm{n}\right) = p_{1}(n_1) p_{2}(n_2)\dots p_{L}(n_L).
\end{equation}
}
\subsubsection{M-or-None deployment model}
In this model once again we assume that a user at an arbitrary location in the plane can connect to \BS{all type$-\ell$ caches that are within radius $r_\ell$}. Type$-1$ caches represent macro base stations and follow a two-dimensional (2D) spatial homogeneous Poisson process with density $\lambda_1 > 0$. As a consequence, the number of type$-1$ caches within radius $r_1$ follows a Poisson distribution satisfying~\eqref{poisdist} for $\ell = 1$.

We assume that if a user is covered by at least one macro base station (type$-1$ cache), then it will have $M$ helpers (other types of caches) in total. As a result, network operators serve users with providing them $M$ helpers as long as they are connected to at least one of the macro base stations. If a user is not covered by a type$-1$ cache, then it doesn't receive any service from other caches either. Therefore, we have
\begin{equation*}
P\left(\bm{\mathcal{N}}^{(-1)} = \bm{n}^{(-1)} \vert \mathcal{N}_1 = 0\right) = \left\{
\begin{array}{rl}
0 & \text{if } \sum_{l=2}^L n_l \neq 0,\\
1 & \text{if } \sum_{l=2}^L n_l = 0,
\end{array} \right.
\end{equation*}
and,
\begin{equation*}
P\left(\bm{\mathcal{N}}^{(-1)} = \bm{n}^{(-1)} \vert \mathcal{N}_1 = n_1 \right) = \left\{
\begin{array}{rl}
0 & \text{if } \sum_{l=2}^L n_l \neq M,\\
1 & \text{if } \sum_{l=2}^L n_l = M,
\end{array} \right.
\end{equation*}
when $n_1 > 0$.

\subsection{File Popularities}
In this section we will introduce file popularity distributions. Even though any popularity distribution can be used, our numerical results will be based on Zipf distribution. Specifically we will use standard and perturbed Zipf models.
\subsubsection{Zipf distribution}
For this model, without loss of generality, $a_1 \geq a_2 \geq \dots \geq a_J$. The probability that a user will ask for content $c_j$ is then equal to
\begin{equation}
a_j = \frac{j^{-\gamma}}{\sum_{j=1}^J j^{-\gamma}}, \label{zipfpars}
\end{equation}
where $\gamma > 0$ is the Zipf parameter.

\subsubsection{Perturbed Zipf distribution}
In practice, one might not have the exact file popularities available. Instead, only estimates might be available.
Suppose that $a_j^{\text{pert}}$ values are the actual file popularity values and that $a_j$ values are estimates of these popularities. We propose a perturbed Zipf model for the actual popularity distribution. In this model the probability that a user will ask for content $c_j$ is equal to
\begin{equation}
a_j^{\text{\text{pert}}} = \frac{\left(a_j + Z_j \right)^+}{\sum_{j=1}^J \left(a_j + Z_j \right)^+}, \label{zipfnoisy}
\end{equation}
where $a_j$ follows a Zipf distribution~\eqref{zipfpars} with given $\gamma > 0$, $Z_j$ is the noise, where $Z_j$ is independent and identically distributed and drawn from a zero-mean normal distribution with variance $\sigma_j^2$. Note that the difference between the available popularity values $a_j$ and the actual file popularity values $a_j^{\text{pert}}$ increases as the variance $\sigma_j^2$ of the perturbation increases.

\section{Joint Optimization for the Homogeneous Poisson Point Process (PPP) model}
\label{sec:jointoptimization}

Finding the optimal placement strategy for all types of caches jointly is an interesting problem. However, this joint optimization problem presented in Problem~\ref{orgprb} is not convex in general conditions. 

When each type of cache is deployed according to a homogeneous spatial Poisson process, it is possible to reformulate the joint optimization problem such that the problem becomes convex. We will present such a formulation in the next subsection and continue with the general structure afterwards.

Since each type of cache is deployed according to a homogeneous spatial Poisson process, we can see for any file $j$ being present in a type$-i$ cache as a thinned Poisson process~\cite{stochasticgeometry}. Then, for any user in the plane, the probability of missing file $j$ is equal to the joint probability of the thinned Poisson processes. We will continue with formulation and the optimal solution of this problem.
\subsection{Formulation of the problem}
In this model, we assume that a user at an arbitrary location in the plane can connect to all type$-\ell$ caches that are within radius $r_\ell$. The caches follow a two-dimensional (2D) spatial homogeneous Poisson process with type$-\ell$ caches independently distributed in the plane with density $\lambda_\ell > 0$, where $\ell = 1, \dots, L$. The number of type$-\ell$ caches within radius $r_\ell$ follows a Poisson distribution with parameter $\lambda_\ell \pi r_\ell^2$. Then, from thinning the Poisson process, it follows that type$-\ell$ caches storing the file $c_j$ follows the Poisson distribution with parameter $b_j^{(\ell)}\lambda_\ell \pi r_\ell^2$. The performance metric is the total miss probability which is the probability that a user will miss the content that he requires in one of the caches that he is covered by. 
\begin{lem}
With the thinned Poisson process argument, the total miss probability is given by
\begin{equation}
f_{\text{joint}}\left(\bm{B}\right) = \sum_{j=1}^J a_j \exp\left\{-\sum_{\ell = 1}^L b_j^{(\ell)} \lambda_\ell \pi r_\ell^2\right\}.
\label{thinprob}
\end{equation}
\end{lem}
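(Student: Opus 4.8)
The plan is to specialize the general miss-probability expression~\eqref{missprob} to the homogeneous PPP deployment and evaluate the resulting sums in closed form. Two ingredients from the PPP subsection drive the computation: by~\eqref{PPPcoverage} the joint coverage distribution factorizes as $p(\bm{n}) = \prod_{\ell=1}^L p_\ell(n_\ell)$, and by~\eqref{poisdist} each marginal $p_\ell(n_\ell) = e^{-t_\ell} t_\ell^{n_\ell}/n_\ell!$ is Poisson with parameter $t_\ell = \lambda_\ell \pi r_\ell^2$. Substituting the product form of $p(\bm{n})$ into~\eqref{missprob} and grouping the $\ell$-indexed factors, I would rewrite the $L$-fold sum over $\bm{n}$ as a product of $L$ single-index sums, so that for each file $j$ the contribution becomes $a_j \prod_{\ell=1}^L \left[\sum_{n_\ell=0}^\infty p_\ell(n_\ell)\,(1-b_j^{(\ell)})^{n_\ell}\right]$.

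Each inner sum is then a Poisson probability generating function evaluated at $1-b_j^{(\ell)}$. First I would factor out $e^{-t_\ell}$ and recognize the remaining series as the exponential series, giving
\begin{equation*}
\sum_{n_\ell=0}^\infty \frac{t_\ell^{n_\ell}}{n_\ell!}\, e^{-t_\ell}\,(1-b_j^{(\ell)})^{n_\ell} = e^{-t_\ell}\, e^{\,t_\ell(1-b_j^{(\ell)})} = e^{-t_\ell b_j^{(\ell)}}.
\end{equation*}
This is exactly the void-probability interpretation hinted at in the text: by the thinning theorem the type$-\ell$ caches that store file $c_j$ form a Poisson process of intensity $b_j^{(\ell)}\lambda_\ell$, so the number of them within radius $r_\ell$ is Poisson with mean $b_j^{(\ell)} t_\ell$, and the probability that this count is zero, \ie that the user misses $c_j$ among its type$-\ell$ caches, is $e^{-b_j^{(\ell)} t_\ell}$. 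Multiplying these independent per-type miss probabilities and restoring $t_\ell = \lambda_\ell \pi r_\ell^2$ yields $\exp\{-\sum_{\ell=1}^L b_j^{(\ell)} \lambda_\ell \pi r_\ell^2\}$ for file $j$, and summing against $a_j$ gives~\eqref{thinprob}.

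The only step requiring care is the interchange that turns the single $L$-fold infinite sum into a product of sums, equivalently the factorization of the joint expectation into independent per-type expectations. Since every summand is nonnegative and each series is dominated by the corresponding Poisson probabilities, absolute convergence holds and Tonelli's theorem justifies the rearrangement, so I do not expect a genuine obstacle here. The substance of the lemma is the generating-function evaluation, which collapses the Poisson-weighted power series into a single exponential.
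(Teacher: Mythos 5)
Your proof is correct, but it follows a different route than the paper's. You derive the result bottom-up: starting from the general miss probability $f(\bm{B})$ of Section~\ref{model}, you substitute the factorized PPP coverage distribution~\eqref{PPPcoverage}, split the $L$-fold sum into a product of single sums (justified by Tonelli, as you note), and evaluate each inner sum as a Poisson probability generating function, $\sum_{n_\ell \geq 0} p_\ell(n_\ell)(1-b_j^{(\ell)})^{n_\ell} = e^{-t_\ell b_j^{(\ell)}}$. The paper instead argues top-down with two standard stochastic-geometry facts and no series computation at all: by the thinning theorem the type$-\ell$ caches storing $c_j$ form a PPP of intensity $b_j^{(\ell)}\lambda_\ell$, giving the per-type miss probability $\exp\{-b_j^{(\ell)}\lambda_\ell \pi r_\ell^2\}$ as a void probability, and by superposition the union of the independent thinned processes across types is again a PPP whose density is the sum of the densities, so the overall void probability is $\exp\{-\sum_\ell b_j^{(\ell)}\lambda_\ell \pi r_\ell^2\}$. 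The two arguments are mathematically equivalent (your generating-function evaluation is exactly the void probability of a thinned Poisson count), but they buy different things: your computation explicitly verifies that $f_{\text{joint}}$ is the specialization of the general objective $f(\bm{B})$ to the PPP model, tying Problem~\ref{prb:PPPorgprb} to Problem~\ref{orgprb}, whereas the paper's argument is shorter, avoids the interchange-of-summation issue entirely, and makes the structural reason for the exponential form (thinning plus superposition of independent PPPs) transparent.
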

\begin{proof}
By using the thinning argument, the total probability of miss for type$-\ell$ cache is equal to the probability of being covered by $0$ type$-\ell$ caches storing the file. Therefore, we have
\begin{equation*}
f_{\ell}\left(\bm{b}^{(\ell)}\right) = \sum_{j=1}^J a_j \exp\left\{-b_j^{(\ell)} \lambda_\ell \pi r_\ell^2\right\}.
\end{equation*}
When different types of caches' locations are all following homogeneous Poisson processes, the probability of missing a specific file is equal to the joint probability of being not covered by any cache storing that specific file over all types of caches. Hence, \BS{the union of independent Poisson point processes is a Poisson point process with a density equal to the sum of the respective densities and the total miss probability is given by Eq.~\eqref{thinprob}}.
\end{proof}
We define the optimization problem to find the optimal placement strategy minimizing the total hit probability for all caches jointly as follows:
\begin{prb}
\label{prb:PPPorgprb}
\begin{align}
&\min \text{ } f_{\text{joint}}\left(\bm{B}\right)\nonumber\\
&\text{ }\mathbf{s.t.}\quad b_1^{(\ell)} + \dots + b_J^{(\ell)} = K_\ell, \quad b_j^{(\ell)} \in [0,1],\quad \forall j, \ell. \label{constraints2}
\end{align}
\end{prb}
\subsection{Solution of the optimization problem}
In this section, we will analyze the structure of the optimization problem.
\begin{lem}
\label{convex}
Problem~\ref{prb:PPPorgprb} is a convex optimization problem.
\end{lem}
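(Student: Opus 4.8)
The plan is to verify the two defining ingredients of a convex optimization problem separately: that the feasible region is a convex set, and that the objective $f_{\text{joint}}$ is a convex function on it. The feasible region is the easy part. The constraints in~\eqref{constraints2} consist, for each type $\ell$, of a single linear equality $b_1^{(\ell)} + \dots + b_J^{(\ell)} = K_\ell$ together with the box constraints $b_j^{(\ell)} \in [0,1]$. Each equality defines a hyperplane and each box constraint a pair of halfspaces, so their intersection is a polytope and hence convex. I would simply record that the feasible set is an intersection of finitely many affine equalities and halfspaces.

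The substance lies in showing that $f_{\text{joint}}$ is convex, and I would argue this via the standard composition rule. For each file $j$, the inner expression $-\sum_{\ell} b_j^{(\ell)} \lambda_\ell \pi r_\ell^2$ is an affine function of the decision variables $\bm{B}$, and since $\exp(\cdot)$ is convex, the composition $\exp\{-\sum_{\ell} b_j^{(\ell)} \lambda_\ell \pi r_\ell^2\}$ of a convex function with an affine map is convex. Because $a_j \geq 0$, taking a nonnegative weighted sum preserves convexity, and therefore $f_{\text{joint}}$ is convex on the whole feasible polytope.

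As a more transparent alternative I would compute the Hessian directly. The key observation is that the variables attached to distinct files never share a common summand, so the Hessian is block diagonal with respect to the file index $j$. Writing $\bm{v} = (\lambda_1 \pi r_1^2, \dots, \lambda_L \pi r_L^2)$, the block associated with file $j$ evaluates to $a_j \exp\{-\sum_{\ell} b_j^{(\ell)} \lambda_\ell \pi r_\ell^2\}$ times the outer product $\bm{v}^{\top}\bm{v}$, i.e.\ a rank-one positive semidefinite matrix scaled by the strictly positive factor $a_j \exp\{\cdots\}$. Each block is therefore positive semidefinite, hence so is the full block-diagonal Hessian, and convexity follows. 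There is no genuine obstacle in this proof; the only point deserving a moment's care is noticing that, although the $L$ types are coupled within each file through the sum over $\ell$ in the exponent, the files themselves decouple completely. It is precisely this decoupling that yields the clean block structure of the Hessian and makes its positive semidefiniteness immediate.
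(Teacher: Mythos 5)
Your proposal is correct and takes essentially the same approach as the paper: the paper's proof likewise observes that each summand of $f_{\text{joint}}$ is the exponential of an affine function of $\bm{B}$ and that a positively-weighted sum of such convex terms remains convex. Your extra material---the explicit check that the polyhedral feasible set is convex and the block-diagonal Hessian computation---is sound but goes beyond what the paper records.
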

\begin{proof}
\BSg{The exponential of an affine function is known to be convex. Our performance metric~\eqref{thinprob} is a positively-weighted sum of the exponential of affine combinations, which is still convex.}
\end{proof}
We already showed that $f_{\text{joint}}\left(\bm{B}\right)$ is convex by Lemma~\ref{convex} and the constraint set is linear as given in~\eqref{constraints2}. Thus, the Karush-Kuhn-Tucker (KKT) conditions provide necessary and sufficient conditions for optimality~\cite{KKTref}. We define a new parameter $d_j$ as the sum of the intensities of all thinned Poisson processes for file $c_j$ as follows:
\begin{equation}
d_j = \sum_{\ell = 1}^L b_j^{(\ell)} \lambda_\ell \pi r_\ell^2 \label{djdef},
\end{equation}
and the following vector consisting of the sum of the intensities of all types of caches for all files: $\bm{D} = (d_1, \dots, d_J)$ as a J-tuple.

Then, the total miss probability is given by
\begin{equation}
f_{\text{sum}}\left(\bm{D}\right) = \sum_{j=1}^J a_j \exp(-d_j),
\label{missprob}
\end{equation}
and we have a optimization problem to find the optimal placement strategy minimizing the total hit probability for all caches when caches are following PPP as follows:
\begin{prb}
\label{prb:PPPmodprb}
\begin{align}
&\min \text{ } f_{\text{sum}}\left(\bm{D}\right)\nonumber\\
&\text{ }\mathbf{s.t.}\quad \sum_{j = 1}^J d_j = \sum_{\ell = 1}^L K_\ell \lambda_\ell \pi r_\ell^2,\label{PPPmodconstraints1}\\
&\hspace{1cm}0 \leq d_j \leq \sum_{\ell = 1}^L \lambda_\ell \pi r_\ell^2,\quad \forall j. \label{PPPmodconstraints2}
\end{align}
\end{prb}
Note that~\eqref{PPPmodconstraints1} follows from the combination of the capacity constraint in~\eqref{constraints2} and the definition of the parameter $d_j$ as presented in~\eqref{djdef} (and changing the summation order.). Similarly,~\eqref{PPPmodconstraints2} directly follows from the boundary constraint in~\eqref{constraints2} and the definition of the parameter $d_j$ as presented in~\eqref{djdef}.

Problem~\ref{prb:PPPmodprb} is a nonlinear resource allocation problem and has the same structure as the problem presented in~\cite{nonlinearresource}. As such, although a solution algorithm to give the optimal solution is available, a closed-form expression for this class of problems is not available in general. One of the contributions of this paper is an explicit closed-form solution for $\bm{D}$. Also, we will demonstrate how to find the optimal placement strategies for all types of caches, \ie how to find $\bm{B}$ from $\bm{D}$.

The Lagrangian function corresponding to Problem~\ref{prb:PPPmodprb} is
\begin{align}
&L\left(\bm{d}, \nu, \bm{\eta}, \bm{\omega}\right) = \sum_{j=1}^J a_j \exp({-d_j})+\nu \left(\sum_{j=1}^J d_j - 
\BS{\sum_{\ell = 1}^ L}K_\ell \lambda_\ell \pi r_\ell^2 \right)\nonumber\\
&- \sum_{j=1}^J \eta_j d_j +\sum_{j=1}^J \omega_j \left(d_j - \sum_{\ell = 1}^L \lambda_\ell \pi r_\ell^2 \right), \nonumber
\end{align}
where $\bm{d}$, $\bm{\eta}$, $\bm{\omega} \in \mathbb{R}_+^{J}$ and $\nu \in \mathbb{R}$.

Let $\bar{\bm{d}}$, $\bar{\bm{\eta}}$, $\bar{\bm{\omega}}$ and $\bar{\nu}$ be primal and dual optimal. The KKT conditions for Problem~\ref{prb:PPPmodprb} state that
\begin{align}
\sum_{j=1}^J \bar{d}_j = \sum_{\ell = 1}^L K_\ell \lambda_\ell \pi r_\ell^2&, \label{kkt2ppp}\\
0 \leq \bar{d}_j \leq \sum_{\ell = 1}^L \lambda_\ell \pi r_\ell^2&, \quad \forall j = 1,\dots, J \label{kkt1ppp}\\
\bar{\eta}_j \geq 0&,\quad \forall j = 1,\dots, J, \label{kkt3ppp}\\
\bar{\omega}_j \geq 0&,\quad \forall j = 1,\dots, J, \label{kkt4ppp}\\
\bar{\eta}_j \bar{d}_j = 0&,\quad \forall j = 1,\dots, J, \label{kkt5ppp}\\
\bar{\omega}_j \left(\bar{d}_j - \sum_{\ell = 1}^L \lambda_\ell \pi r_\ell^2\right) = 0&, \quad \forall j = 1,\dots, J, \label{kkt6ppp}\\
-a_j \exp(-\bar{d}_j) + \bar{\nu} - \bar{\eta}_j + \bar{\omega}_j = 0&,\quad \forall j = 1,\dots, J \label{kkt7ppp}.
\end{align}
\begin{thm}
\label{thm:PPPoptsol}
The optimal placement strategy for Problem~\ref{prb:PPPmodprb} satisfies
\begin{align}
\label{PPP:dfunc}
\bar{d}_j = \left\{
\begin{array}{rl}
\sum_{\ell = 1}^L \lambda_\ell \pi r_\ell^2, & \text{if } j < s_1\\
\log{\frac{a_j}{\bar{\nu}}}, & \text{if } s_1 \leq j \leq s_2,\\
0, & \text{if } j > s_2,
\end{array} \right.
\end{align}
where
\begin{align}
\label{PPP:gfunc}
g_j(\nu) = \left\{
\begin{array}{rl}
\sum_{\ell = 1}^L \lambda_\ell \pi r_\ell^2, & \text{if } \nu \leq a_j \exp\left(-\sum_{\ell = 1}^L \lambda_\ell \pi r_\ell^2\right)\\
0, & \text{if } \nu \geq a_j,\\
\log{\frac{a_j}{\nu}}, & \text{otherwise},
\end{array} \right.
\end{align}
and $g: \mathbb{R} \rightarrow \left[0, \sum_{\ell = 1}^L K_\ell \lambda_\ell \pi r_\ell^2\right]$, where $g(\nu) = \sum_{j=1}^J g_j(\nu)$,
\begin{align}
s_1 = \min\biggl\{\,1 \leq k \leq J \mathrel{\Big|} g\left(a_k \exp\left(-\sum_{\ell = 1}^L \lambda_\ell \pi r_\ell^2 \right)\right)  \geq \sum_{\ell = 1}^L K_\ell \lambda_\ell \pi r_\ell^2 \biggr\},\label{PPPs1}
\end{align}
\begin{equation}
s_2 = \max\biggl\{1 \leq k \leq J \mathrel{\Big|} g(a_k) \leq \sum_{\ell = 1}^L K_\ell \lambda_\ell \pi r_\ell^2 \biggr\}, \label{PPPs2}
\end{equation}
and
\begin{equation}
\bar{\nu} = \exp \left\{ \frac{\sum_{j = s_1}^{s_2} \log a_j -\sum_{\ell = 1}^L \lambda_\ell \pi r_\ell^2\left(K_\ell - s_1 + 1\right)}{s_2 - s_1 + 1}\right\}.\label{PPPnubar}
\end{equation}
\end{thm}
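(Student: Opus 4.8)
The plan is to use the fact, established in Lemma~\ref{convex}, that Problem~\ref{prb:PPPmodprb} is convex with affine constraints, so that the KKT conditions~\eqref{kkt2ppp}--\eqref{kkt7ppp} are both necessary and sufficient for optimality. Writing $T := \sum_{\ell=1}^L \lambda_\ell \pi r_\ell^2$ for the common upper bound on each coordinate and $C := \sum_{\ell=1}^L K_\ell \lambda_\ell \pi r_\ell^2$ for the total budget, I would first solve the stationarity and complementary-slackness relations pointwise in $j$ to express each $\bar{d}_j$ through the single scalar multiplier $\bar{\nu}$, and only afterwards pin down $\bar{\nu}$, $s_1$ and $s_2$ using the equality constraint~\eqref{kkt2ppp} together with the ordering $a_1 \geq \dots \geq a_J$.

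For the pointwise step I would fix $j$ and split into three cases according to which of the box constraints in~\eqref{kkt1ppp} is active. If $\bar{d}_j \in (0,T)$, neither bound is tight, so~\eqref{kkt5ppp} and~\eqref{kkt6ppp} force $\bar{\eta}_j = \bar{\omega}_j = 0$ and the stationarity relation~\eqref{kkt7ppp} collapses to $a_j \exp(-\bar{d}_j) = \bar{\nu}$, giving $\bar{d}_j = \log(a_j/\bar{\nu})$. If $\bar{d}_j = 0$, then~\eqref{kkt6ppp} gives $\bar{\omega}_j = 0$ and~\eqref{kkt7ppp} yields $\bar{\eta}_j = \bar{\nu}-a_j$, which is admissible under~\eqref{kkt3ppp} exactly when $\bar{\nu}\geq a_j$. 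Symmetrically, $\bar{d}_j = T$ forces $\bar{\eta}_j = 0$ and requires $\bar{\nu}\leq a_j \exp(-T)$ for~\eqref{kkt4ppp} to hold. These three regimes are precisely the branches of $g_j$ in~\eqref{PPP:gfunc}, so $\bar{d}_j = g_j(\bar{\nu})$ for every $j$.

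Next I would invoke monotonicity to obtain the threshold structure. Since $a_j$ is non-increasing in $j$, both breakpoints $a_j$ and $a_j \exp(-T)$ of $g_j$ are non-increasing, so for a fixed $\bar{\nu}$ the index set on which $\bar{d}_j = T$ is an initial segment $\{1,\dots,s_1-1\}$ and the set on which $\bar{d}_j = 0$ is a final segment $\{s_2+1,\dots,J\}$, the remaining indices taking the logarithmic value; this is exactly~\eqref{PPP:dfunc}. The definitions~\eqref{PPPs1} and~\eqref{PPPs2} then locate these breakpoints through $g(\nu)=\sum_j g_j(\nu)$: because each $g_j$, and hence $g$, is continuous and non-increasing with $g(\bar{\nu}) = C$, the conditions $g\bigl(a_k\exp(-T)\bigr)\geq C$ and $g(a_k)\leq C$ translate into $a_k\exp(-T)\leq \bar{\nu}$ and $a_k\geq\bar{\nu}$ respectively, which are precisely the inequalities separating the three regimes.

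Finally, with $s_1$ and $s_2$ fixed, I would substitute~\eqref{PPP:dfunc} into the budget constraint~\eqref{kkt2ppp}: the saturated indices contribute $(s_1-1)T$, the zero indices contribute nothing, and the middle indices contribute $\sum_{j=s_1}^{s_2}\bigl(\log a_j - \log\bar{\nu}\bigr)$, so that solving the resulting linear equation for $\log\bar{\nu}$ produces the closed form~\eqref{PPPnubar}. I expect the main obstacle to be the self-consistency of this construction rather than any individual computation: one must verify that the $\bar{\nu}$ produced by~\eqref{PPPnubar} really places the indices $s_1,\dots,s_2$ in the logarithmic regime (so that $a_j\exp(-T) < \bar{\nu} < a_j$ there) and that $s_1 \leq s_2$, which rests on the monotonicity of $g$ together with the feasibility condition $C \leq JT$; the degenerate case in which the middle regime is empty would have to be handled separately.
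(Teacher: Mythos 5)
Your proposal is correct and follows essentially the same route as the paper: both rely on convexity making the KKT conditions~\eqref{kkt2ppp}--\eqref{kkt7ppp} necessary and sufficient, extract the three-regime structure $\bar{d}_j = g_j(\bar{\nu})$ from stationarity plus complementary slackness, use the ordering $a_1 \geq \dots \geq a_J$ to get the threshold indices $s_1, s_2$, and recover $\bar{\nu}$ by substituting into the budget constraint $g(\bar{\nu}) = \sum_{\ell} K_\ell \lambda_\ell \pi r_\ell^2$. The only cosmetic difference is that the paper packages the case analysis into the single product identity~\eqref{starppp} for $\bar{\omega}_j$ rather than splitting explicitly on which box constraint is active, and your closing caveat about self-consistency of $\bar{\nu}$, $s_1$, $s_2$ is a point the paper likewise treats only implicitly.
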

\begin{proof}
From~\eqref{kkt5ppp},~\eqref{kkt6ppp} and~\eqref{kkt7ppp}, we have
\begin{equation}
\bar{\omega}_j = \frac{\bar{d}_j}{\sum_{\ell = 1}^L \lambda_\ell \pi r_\ell^2} \left[a_j \exp\left(-\bar{d}_j\right) - \bar{\nu}\right], \label{omegaeqppp}
\end{equation}
which, when insterted into~\eqref{kkt6ppp}, gives
\begin{equation}
\frac{\bar{d}_j}{\sum_{\ell = 1}^L \lambda_\ell \pi r_\ell^2} \left[a_j \exp\left(-\bar{d}_j\right) - \bar{\nu}\right] \left(\bar{d}_j - \sum_{\ell = 1}^L \lambda_\ell \pi r_\ell^2\right) = 0 \label{starppp}.
\end{equation}
From~\eqref{starppp}, we see that $0 < \bar{d}_j < \sum_{\ell = 1}^L \lambda_\ell \pi r_\ell^2$ only if 
\begin{equation*}
\bar{\nu} = a_j \exp\left(-\bar{d}_j\right).
\end{equation*}
Since we know that $0 \leq d_j \leq \sum_{\ell = 1}^L \lambda_\ell \pi r_\ell^2$, this implies that
\begin{align*}
\bar{\nu} \in \left[a_j \exp\left(-\sum_{\ell = 1}^L \lambda_\ell \pi r_\ell^2\right), a_j \right].
\end{align*}
If $\bar{\nu} \leq a_j \exp\left(-\sum_{\ell = 1}^L \lambda_\ell \pi r_\ell^2\right)$, we have $\bar{\omega}_j > 0$.
Thus, from~\eqref{kkt6ppp}, we have $\bar{d}_j = \sum_{\ell = 1}^L \lambda_\ell \pi r_\ell^2$. Similarly, if $\bar{\nu} \geq a_j$, we have $\bar{\eta}_j > 0$. Hence, from~\eqref{kkt5ppp}, we have $\bar{d}_j = 0$.

Recalling $d_j$ from Eq.~\eqref{djdef}, when $\bar{d}_j = \sum_{\ell = 1}^L \lambda_\ell \pi r_\ell^2$, it means that $\bar{b}_j^{(\ell)} = 1$, $\forall \ell = 1,\dots,L$. Similarly, when $\bar{d}_j = 0$, $\bar{b}_j^{(\ell)} = 0$, $\forall \ell = 1,\dots,L$. Then, it follows that there exist $s_1, s_2 \in [1, J]$ such that $\bar{b}^{(\ell)}_1 = \bar{b}^{(\ell)}_2 = \dots = \bar{b}^{(\ell)}_{s_1 -1} = 1$ and $\bar{b}^{(\ell)}_{s_2 + 1} = \bar{b}^{(\ell)}_{s_2 + 2} = \dots = \bar{b}^{(\ell)}_J = 0$, $\forall \ell = 1,\dots,L$. Then $s_1$ is given by
\begin{equation}
s_1 = \min\left\{1 \leq k \leq J \mathrel{\Big|} \bar{\nu} > a_k \exp\left(-\sum_{\ell = 1}^L \lambda_\ell \pi r_\ell^2 \right) \right\}. \label{eqs1earppp}
\end{equation}
In order to satisfy the capacity constraint~\eqref{kkt2ppp}, above minimum is guaranteed to exist. Then $s_1$ is obtained by inserting function $g$ to~\eqref{eqs1earppp} and applying the capacity constraint~\eqref{kkt2ppp}.
\BS{Similarly, }$s_2$ is found by applying the same steps and given by
\begin{equation}
s_2 = \max\left\{1 \leq k \leq J \mathrel{\Big|} \bar{\nu} <a_k \right\}. \label{eqs2earppp}
\end{equation}
Using the same argument, in order to satisfy the capacity constraint~\eqref{kkt2ppp} the above maximum is guaranteed to exist. The proof is completed by solving for $\bar{\nu}$ in $g(\bar{\nu}) = \sum_{\ell = 1}^L K_\ell \lambda_\ell \pi r_\ell^2$.
\end{proof}

\BSg{\begin{thm}
\label{thm:PPPstrategy}
The optimal placement strategy for Problem~\ref{prb:PPPorgprb} satisfies
\begin{align}
\label{PPP:bfunc}
\bar{b}_j^{(\ell)} = \left\{
\begin{array}{rl}
1, \forall \ell, & \text{if } j < s_1\\
\phi_j(\ell), & \text{if } s_1 \leq j \leq s_2,\\
0, \forall \ell, & \text{if } j > s_2,
\end{array} \right.
\end{align}
where $\phi_j(\ell)$ is any solution over $b^{(\ell)}_j\in [0,1]$ satisfying
\begin{equation}
\sum_{\ell = 1}^{L} \bar{b}_j^{(\ell)} \lambda_\ell \pi r_\ell^2 = \log\frac{a_j}{\bar{\nu}}, \label{midconstr}
\end{equation}
\begin{equation}
\bar{b}_{s_1}^{(\ell)} + \dots + \bar{b}_{s_2}^{(\ell)} = K_\ell - s_1 + 1, \label{thm3bconstrst} 
\end{equation}
for all $s_1\leq j\leq s_2$ and $1\leq\ell\leq L$ and where $s_1$, $s_2$ and $\bar{\nu}$ are given by from Eq.~\eqref{PPPs1},~\eqref{PPPs2} and~\eqref{PPPnubar}, respectively.
\end{thm}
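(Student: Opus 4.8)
The plan is to leverage Theorem~\ref{thm:PPPoptsol}, which already pins down the optimal $\bar{d}_j$ for the reduced Problem~\ref{prb:PPPmodprb}, and to transfer this back to the original variables $\bm{B}$ through the linear relation $d_j = \sum_{\ell=1}^L b_j^{(\ell)}\lambda_\ell\pi r_\ell^2$ of Eq.~\eqref{djdef}. First I would record the two facts that make the transfer legitimate: the objective obeys $f_{\text{joint}}(\bm{B}) = f_{\text{sum}}(\bm{D})$ whenever $\bm{D}$ is the image of $\bm{B}$ under this map, and (as already observed when deriving~\eqref{PPPmodconstraints1}--\eqref{PPPmodconstraints2}) the image of any $\bm{B}$ feasible for Problem~\ref{prb:PPPorgprb} is feasible for Problem~\ref{prb:PPPmodprb}. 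Hence the optimal value of Problem~\ref{prb:PPPorgprb} cannot be smaller than that of Problem~\ref{prb:PPPmodprb}, and it suffices to exhibit a feasible $\bm{B}$ whose image equals the optimal $\bar{\bm{D}}$ of Theorem~\ref{thm:PPPoptsol}.

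Next I would translate the three regimes of $\bar{d}_j$ into constraints on $\bar{b}_j^{(\ell)}$. For $j < s_1$ we have $\bar{d}_j = \sum_{\ell}\lambda_\ell \pi r_\ell^2$; since each summand obeys $b_j^{(\ell)}\lambda_\ell\pi r_\ell^2 \le \lambda_\ell\pi r_\ell^2$, attaining this maximum forces $\bar{b}_j^{(\ell)} = 1$ for every $\ell$. For $j > s_2$ we have $\bar{d}_j = 0$; as every summand is nonnegative and each $\lambda_\ell\pi r_\ell^2 > 0$, this forces $\bar{b}_j^{(\ell)} = 0$ for every $\ell$. For $s_1 \le j \le s_2$ the value $\bar{d}_j = \log(a_j/\bar{\nu})$ lies strictly between the two extremes, and by Eq.~\eqref{djdef} this is exactly constraint~\eqref{midconstr}. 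Finally, subtracting from the per-type capacity $\sum_{j=1}^J b_j^{(\ell)} = K_\ell$ the $s_1-1$ unit entries coming from $j<s_1$ (and the zero entries from $j>s_2$) yields $\sum_{j=s_1}^{s_2} b_j^{(\ell)} = K_\ell - s_1 + 1$, which is constraint~\eqref{thm3bconstrst}.

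It remains to show that the middle block of constraints is actually satisfiable, and this is the crux. The necessary aggregate consistency follows by summing~\eqref{midconstr} over $s_1 \le j \le s_2$ and comparing it with $\sum_\ell \lambda_\ell\pi r_\ell^2$ times~\eqref{thm3bconstrst} summed over $\ell$: both equal $\sum_{j=s_1}^{s_2}\log(a_j/\bar{\nu})$, precisely because the closed form~\eqref{PPPnubar} for $\bar{\nu}$ was derived from the capacity condition $g(\bar{\nu}) = \sum_\ell K_\ell\lambda_\ell\pi r_\ell^2$. With total supply equal to total demand and each variable confined to $[0,1]$, the existence of a point-wise feasible matrix $\phi_j(\ell)$ is a transportation-type feasibility statement; I would establish it constructively, \eg by filling the entries greedily across types while respecting the box constraints, using that each required row total $\bar{d}_j$ lies in $\bigl(0,\sum_\ell\lambda_\ell\pi r_\ell^2\bigr)$ and each column total $K_\ell - s_1 + 1$ lies in $[0, s_2-s_1+1]$.

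With feasibility in hand, any $\bm{B}$ of the stated form maps to $\bar{\bm{D}}$, so $f_{\text{joint}}(\bm{B}) = f_{\text{sum}}(\bar{\bm{D}})$, the optimal value of Problem~\ref{prb:PPPmodprb}; combined with the reverse inequality from the first step, this shows such a $\bm{B}$ is optimal for Problem~\ref{prb:PPPorgprb}. Conversely, any optimum must satisfy the three regimes, since $f_{\text{sum}}$ is strictly convex (each $a_j\exp(-d_j)$ is strictly convex) and hence has the unique minimizer $\bar{\bm{D}}$ over the convex feasible set. The main obstacle I anticipate is not the boundary regimes, which are forced, but making the middle-block feasibility rigorous: one faces a genuine bipartite allocation problem, and care is needed to verify that the box constraints never obstruct the assignment once the aggregate totals agree.
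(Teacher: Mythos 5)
Your proposal follows essentially the same route as the paper's own proof: it transfers the unique optimizer $\bar{\bm{D}}$ of Theorem~\ref{thm:PPPoptsol} back through the linear map of Eq.~\eqref{djdef}, observes that the two boundary regimes force $\bar{b}_j^{(\ell)}\in\{0,1\}$ for all $\ell$, and reduces the middle block to a balanced capacitated transportation feasibility problem settled by a greedy filling, with balance checked exactly as the paper does via Eq.~\eqref{PPPnubar} and the row-capacity compatibility via Eq.~\eqref{PPPmodconstraints2}. The one step you flag as the crux but do not prove --- that each column total $K_\ell - s_1 + 1$ lies in $[0,\, s_2 - s_1 + 1]$ --- is the same step the paper also assumes implicitly without proof, so your attempt matches the paper's argument in both substance and level of rigor.
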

\begin{proof}
We will combine the solution given in Theorem~\ref{thm:PPPoptsol} with Eq.~\eqref{djdef} for the proof. In Theorem~\ref{thm:PPPoptsol} we give the solution for the joint problem in terms of $\bm{D}$. This solution is unique in $d_j$'s, and $\bar{d}_j$'s are given in Theorem~\ref{thm:PPPoptsol}. Even though the solution is unique in $d_j$'s, it is easy to observe that the solution is not unique in $b_j$'s from Eq.~\eqref{djdef}. To give an optimal placement strategy in terms of $b_j$'s, we will first show that the relation between the $d_j$'s and $b_j$'s correspond to a \JG{balanced} capacitated transportation problem~\cite{combinatorialoptimization}. Finally, since it is known that the greedy solution is optimal for the balanced capacitated transportation problem, we can obtain $\bar{b}_j^{\ell}$'s greedily.

When $\bar{d}_j = \sum_{\ell = 1}^L \lambda_\ell \pi r_\ell^2$, for any file index $j < s_1$, $\bar{b}_j^{(\ell)} = 1$, $\forall \ell = 1,\dots,L$, \ie file $c_j$ is stored in all types of caches. Similarly, when $\bar{d}_j = 0$, it means that for any file index $j > s_2$, $\bar{b}_j^{(\ell)} = 0$, $\forall \ell = 1,\dots,L$, \ie file \BS{$c_j$ is not stored in any type of caches.} This implies that, the remaining capacity that can be used for files $s_1,\dots,s_2$ in caches of type $\ell$ is $K_\ell - s_1 + 1$. These files should follow~\eqref{midconstr} and~\eqref{thm3bconstrst} and it remains to show that a solution to this system of equations exists.

For notational convenience let $f_j^{(\ell)}=\lambda_\ell\pi r_\ell^2 b_j^{(\ell)}$.
We observe that~\eqref{midconstr} and~\eqref{thm3bconstrst} correspond to a capacitated transportation problem~\cite{combinatorialoptimization} in the variables $f_j^{(\ell)}$, i.e.\ the $f_j^{(\ell)}$ need to satisfy
\begin{align}
\sum_{\ell=1}^L f_j^{(\ell)} &= \log\frac{a_j}{\bar{\nu}}, \label{eq:transp1} \\
\sum_{j=s_1}^{s_2} f_j^{(\ell)} &= \lambda_\ell\pi r_\ell^2(K_\ell - s_1 + 1), \label{eq:transp2} \\ 
0 \leq f_j^{(\ell)} &\leq \lambda_\ell\pi r_\ell^2. \label{eq:transp3} 
\end{align}
In fact, this is a balanced transportation problem, since, by~\eqref{PPPnubar} we have
\begin{align}
\sum_{j = s_1}^{s_2} \log\frac{a_j}{\bar{\nu}}  &= \sum_{j = s_1}^{s_2} \log{a_j} - (s_2 - s_1 + 1)\log{\bar{\nu}} \notag \\
&= \sum_{j = s_1}^{s_2} \log{a_j} - \sum_{j = s_1}^{s_2} \log a_j +\sum_{\ell = 1}^L \lambda_\ell \pi r_\ell^2\left(K_\ell - s_1 + 1\right) \notag \\
&= \sum_{\ell = 1}^L \lambda_\ell \pi r_\ell^2\left(K_\ell - s_1 + 1\right). \label{eq:transp4} 
\end{align}
Moreover, by~\eqref{PPPmodconstraints2} we have
\begin{equation}
\log\frac{a_j}{\bar{\nu}} \leq \sum_{\ell=1}^L \lambda_\ell\pi r_\ell^2. \label{eq:transp5}
\end{equation}
Due to~\eqref{eq:transp5}, the $f_j^{(\ell)}$ can be found for each file satisfying~\eqref{eq:transp1} and~\eqref{eq:transp3}. Finally, it is readily verified that due to~\eqref{eq:transp4} this can be done greedily by considering each file consecutively.
\end{proof}}

\section{Local Optimization}
\label{sec:distributedoptimization}
Since the joint optimization problem presented in Problem~\ref{orgprb} is not convex for general deployment scenarios, \ie if the cache deployments are not following homogeneous PPP, we will provide a heuristic algorithm that finds the optimal solution for a single type of cache assuming that all other types are storing files with fixed probabilities at any iteration step. The main aim is to see how the overall system performance behaves as the algorithm solves for all types of caches iteratively. The procedure is as follows. At each iteration step we find the optimal strategy for a specific type of cache assuming that the placement strategies for other types of caches are known and fixed. Then we continue with the same procedure for the next type and we continue iterating over different types. In the ensuing subsections, first we will formulate the local optimization problem and give the optimal solution for a single type of cache class. Then we continue with presenting our Local Optimization Algorithm by using the local optimization solution we have obtained.
\subsection{Formulation and solution of the problem}
In this section, we will formulate the local optimization problem for a single type of cache where all the other types of caches' placement strategies are known and fixed and give the solution to the problem.

We start this section by defining our performance metric and the formulation of the optimization problem. The performance metric is the total miss probability which is the probability that a user will not find the content that she requires in any of the {\it caches} that she is covered by. We assume that the placement strategy for the probability distribution over $J$ files through all $L-1$ types is fixed and known and we will solve the optimization problem for only one type. In this section, without loss of generality, we consider the optimization of type$-1$.
For notational convenience, superscript $^c$ in the notation $b_j^{(i)^c}$ indicates that the placement strategy for type$-i$ is known and constant. Then, the total miss probability is given by
\begin{equation}
f_{\text{local}}^{(1)}\left(\bm{b}^{(1)}\right) = \sum_{j=1}^J a_j \sum_{n_1= 0}^{\infty}(1-b_j^{(1)})^{n_1}p_{1}(n_1) q_1(j, n_1),
\label{hitprob}
\end{equation}
where
\begin{align}
\label{qfunc}
&q_m(j, n_m) = P(\text{non type$-m$ caches miss the file $c_j$}) \nonumber\\
&= \sum_{\substack{n_k = 0\\k = 1,\dots,L\\k \neq m}}^{\infty} \prod_{\substack{l = 1\\l \neq m}}^L (1-b_j^{(l)^c})^{n_l} P(\bm{\mathcal{N}}^{(-m)} = \bm{n}^{(-m)} \vert \mathcal{N}_m = n_m).
\end{align}

We define the optimization problem to find the optimal placement strategy minimizing the total miss probability for type$-1$ caches as follows:
\begin{prb}
\label{modprb}
\begin{align}
&\min \text{ } f_{\text{local}}^{(1)}\left(\bm{b}^{(1)}\right)\nonumber\\
&\text{ }\mathbf{s.t.}\quad b_1^{(1)} + \dots + b_J^{(1)} = K_1, \quad b_j^{(1)} \in [0,1],\quad \forall j. \label{constraints3}
\end{align}
\end{prb}

Next, we will analyze the structure of the optimization problem.
\begin{lem}
\label{convex2}
Problem~\ref{modprb} is a convex optimization problem.
\end{lem}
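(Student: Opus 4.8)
The plan is to mirror the structure of the argument for Lemma~\ref{convex}: the feasible set defined by~\eqref{constraints3} is the intersection of the affine constraint $\sum_{j} b_j^{(1)} = K_1$ with the box $[0,1]^J$, hence a convex polytope, so it suffices to establish that the objective $f_{\text{local}}^{(1)}$ is convex on this set.

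First I would exploit separability. In~\eqref{hitprob} the only dependence on the decision vector $\bm{b}^{(1)}$ enters through the factors $(1-b_j^{(1)})^{n_1}$, and the $j$-th summand involves the single coordinate $b_j^{(1)}$ only. All remaining quantities --- the popularity $a_j$, the coverage probabilities $p_1(n_1)$, and the factor $q_1(j,n_1)$ defined in~\eqref{qfunc} --- are fixed nonnegative constants, independent of $\bm{b}^{(1)}$ (recall that $q_1(j,n_1)$ is built from the \emph{fixed} strategies $b_j^{(l)^c}$ of the other types). Hence $f_{\text{local}}^{(1)}(\bm{b}^{(1)}) = \sum_{j=1}^J h_j(b_j^{(1)})$ with each $h_j$ a function of a single scalar, and it is enough to show each $h_j$ is convex on $[0,1]$, since a sum of convex functions of disjoint variables is convex.

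Next I would verify convexity of the elementary building blocks. For fixed integer $n_1 \geq 0$ the map $x \mapsto (1-x)^{n_1}$ has second derivative $n_1(n_1-1)(1-x)^{n_1-2} \geq 0$ for $x \in [0,1]$ (the cases $n_1 = 0,1$ being constant and affine, hence trivially convex), so each such power is convex on $[0,1]$. Since $h_j(x) = a_j \sum_{n_1=0}^{\infty} p_1(n_1) q_1(j,n_1)(1-x)^{n_1}$ is a nonnegative-weighted sum of these convex functions, every finite partial sum is convex, and $h_j$ is the pointwise limit of these partial sums. Because convexity is preserved under pointwise limits, $h_j$ is convex, and therefore so is $f_{\text{local}}^{(1)}$; combined with the convex feasible set this yields the claim.

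The only genuinely delicate point is the passage to the infinite sum: one must confirm the series converges for each $x \in [0,1]$ and that the limit inherits convexity rather than merely holding the inequality on a dense set. Both are routine --- convergence follows from $0 \leq (1-x)^{n_1} \leq 1$ together with $\sum_{n_1} p_1(n_1) q_1(j,n_1) \leq 1$, and the convexity inequality for any two points passes to the limit termwise --- so no substantive obstacle arises. The remaining verification, that the constraints form a convex set, is immediate.
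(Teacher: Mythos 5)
Your proof is correct and follows essentially the same route as the paper's: exploit separability of $f_{\text{local}}^{(1)}$ into single-variable summands and conclude convexity from coordinate-wise convexity of those summands. You simply fill in details the paper leaves implicit (the second-derivative check on $(1-x)^{n_1}$, nonnegativity of the weights, and the passage to the infinite sum), which strengthens rather than changes the argument.
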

\begin{proof}
The objective function is separable with respect to (w.r.t.) $b_1^{(1)}, \dots, b_J^{(1)}$. $f_{\text{local}}^{(1)}\left(\bm{b}^{(1)}\right)$ is convex in $b_j^{(1)}$, $\forall j$. Hence, it is a convex function of $\left(b_1^{(1)}, \dots, b_J^{(1)}\right)$ since it is a sum of convex subfunctions.
\end{proof}
Since $f_{\text{local}}^{(1)}\left(\bm{b}^{(1)}\right)$ is convex by Lemma~\ref{convex2} and the constraint set is linear as given in~\eqref{constraints3}, the Karush-Kuhn-Tucker (KKT) conditions provide necessary and sufficient conditions for optimality. The Lagrangian function corresponding to Problem~\ref{modprb} becomes
\begin{align}
&L\left(\bm{b}^{(1)}, \nu, \bm{\lambda}, \bm{\omega}\right) = \sum_{j=1}^J a_j \sum_{n_1= 0}^{\infty}(1-b_j^{(1)})^{n_1} p_{1}(n_1) q_1(j, n_1) \nonumber\\ 
&+ \nu \left(\sum_{j=1}^J b_j^{(1)} - K_1\right) - \sum_{j=1}^J \lambda_j b_j^{(1)}  + \sum_{j=1}^J \omega_j \left(b_j^{(1)} - 1\right), \nonumber
\end{align}
where $\bm{b}^{(1)}$, $\bm{\lambda}$, $\bm{\omega} \in \mathbb{R}_+^J$ and $\nu \in \mathbb{R}$.

Let $\bar{\bm{b}}^{(1)}$, $\bar{\bm{\lambda}}$, $\bar{\bm{\omega}}$ and $\bar{\nu}$ be primal and dual optimal. The KKT conditions for Problem~\ref{modprb} state that
\begin{align}
\sum_{j=1}^J \bar{b}^{(1)}_j &= K_1, \label{kkt2}\\
0 \leq \bar{b}^{(1)}_j &\leq 1, \quad \forall j = 1,\dots, J, \label{kkt1}\\
\bar{\lambda}_j &\geq 0, \quad \forall j = 1,\dots, J,\label{kkt3}\\
\bar{\omega}_j &\geq 0,\quad \forall j = 1,\dots, J,\label{kkt4}\\
\bar{\lambda}_j \bar{b}^{(1)}_j &= 0,\quad \forall j = 1,\dots, J,\label{kkt5}\\
\bar{\omega}_j \left(\bar{b}^{(1)}_j - 1\right) &= 0, \quad \forall j = 1,\dots, J,\label{kkt6}\\
a_j \sum_{n_1= 0}^{\infty}n_1(1-b_j^{(1)})^{n_1-1}p_{1}(n_1) &q_1(j, n_1) + \bar{\lambda}_j - \bar{\omega}_j = \bar{\nu}, \forall j = 1,\dots, J \label{kkt7}.
\end{align}
\begin{thm}
\label{optsol}
The optimal placement strategy for Problem~\ref{modprb} is
\begin{align}
\label{bfunc}
\bar{b}^{(1)}_j = \left\{
\begin{array}{rl}
1, & \text{if } \bar{\nu} < a_j p_{1}(1) q_1(j, 1)\\
0, & \text{if } \bar{\nu} > a_j \sum_{n_1=0}^{\infty}n_1 p_{1}(n_1) q_1(j, n_1),\\
\phi(\bar{\nu}), & \text{otherwise},
\end{array} \right.
\end{align}
where $\phi(\bar{\nu})$ is the solution over $b^{(1)}_j$ of
\begin{equation}
a_j \sum_{n_1= 0}^{\infty}n_1(1-b_j^{(1)})^{n_1-1} p_{1}(n_1) q_1(j, n_1) = \bar{\nu}, \label{eqsb}
\end{equation}
and $\bar{\nu}$ can be obtained as the unique solution to the additional constraint
\begin{equation}
\bar{b}_1^{(1)} + \dots + \bar{b}_J^{(1)} = K_1. \label{eqK}
\end{equation}
\end{thm}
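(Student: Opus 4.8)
The plan is to invoke convexity to reduce everything to the KKT system and then read off the solution by a case analysis on the complementary-slackness conditions, exactly as in the proof of Theorem~\ref{thm:PPPoptsol}. By Lemma~\ref{convex2} the objective is convex and the feasible set defined by~\eqref{constraints3} is a polytope, so the conditions~\eqref{kkt1}--\eqref{kkt7} are both necessary and sufficient for optimality; it therefore suffices to exhibit multipliers $\bar\nu,\bar{\bm\lambda},\bar{\bm\omega}$ together with the claimed $\bar{\bm b}^{(1)}$ that jointly satisfy them. I would first introduce the marginal-gain function
\[
m_j(b) = a_j \sum_{n_1=0}^\infty n_1 (1-b)^{n_1-1} p_1(n_1)\, q_1(j,n_1),
\]
which is the left-hand side of the stationarity condition~\eqref{kkt7} stripped of its multiplier terms and equals $-\partial f_{\text{local}}^{(1)}/\partial b_j^{(1)}$. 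Since $f_{\text{local}}^{(1)}$ is convex, each $m_j$ is non-increasing on $[0,1]$, which is the single structural fact driving the rest of the argument.

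The core is a three-way split of each coordinate according to~\eqref{kkt5} and~\eqref{kkt6}. If $0<\bar b_j^{(1)}<1$ then both $\bar\lambda_j=0$ (by~\eqref{kkt5}) and $\bar\omega_j=0$ (by~\eqref{kkt6}), so~\eqref{kkt7} collapses to $m_j(\bar b_j^{(1)})=\bar\nu$; inverting the monotone $m_j$ yields the interior value $\phi(\bar\nu)$ of~\eqref{eqsb}. If $\bar b_j^{(1)}=1$ then $\bar\lambda_j=0$ and~\eqref{kkt7} forces $\bar\omega_j=m_j(1)-\bar\nu$, which is admissible (i.e.\ $\bar\omega_j\ge 0$ by~\eqref{kkt4}) precisely when $\bar\nu\le m_j(1)$. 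If $\bar b_j^{(1)}=0$ then $\bar\omega_j=0$ and~\eqref{kkt7} forces $\bar\lambda_j=\bar\nu-m_j(0)$, admissible when $\bar\nu\ge m_j(0)$. Evaluating the two boundary cases is a short computation: at $b=1$ only the $n_1=1$ term survives, giving $m_j(1)=a_j p_1(1) q_1(j,1)$, and at $b=0$ every factor $(1-b)^{n_1-1}$ equals one, giving $m_j(0)=a_j\sum_{n_1=0}^\infty n_1 p_1(n_1) q_1(j,n_1)$. These are exactly the two thresholds appearing in~\eqref{bfunc}, and the strict-versus-nonstrict boundary ambiguity is harmless because at $\bar\nu=m_j(1)$ (resp.\ $\bar\nu=m_j(0)$) the interior branch returns $\phi(\bar\nu)=1$ (resp.\ $0$), so the three branches agree.

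Finally I would pin down $\bar\nu$ through the capacity constraint~\eqref{eqK}. Because every $m_j$ is non-increasing, the coordinate map $\bar\nu\mapsto\bar b_j^{(1)}(\bar\nu)$ defined by~\eqref{bfunc} is non-increasing and continuous, hence so is the aggregate load $\bar\nu\mapsto\sum_{j=1}^J\bar b_j^{(1)}(\bar\nu)$; it decreases from $J$ (all files fully cached, for $\bar\nu$ below $\min_j m_j(1)$) to $0$ (for $\bar\nu$ above $\max_j m_j(0)$). Since $1\le K_1\le J$, the intermediate value property supplies a solution of $\sum_j \bar b_j^{(1)}(\bar\nu)=K_1$, and strict monotonicity on the interior branch gives uniqueness. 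I expect the only genuinely delicate points to be (i) justifying that $m_j$ is invertible on the subinterval where the interior branch is active, which is where the monotonicity of $m_j$—and implicitly the convexity from Lemma~\ref{convex2}—does the real work, and (ii) the existence and uniqueness of $\bar\nu$, which likewise rests on monotonicity and continuity of the aggregate load; the remaining steps amount to routine bookkeeping of the KKT multipliers.
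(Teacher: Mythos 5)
Your proposal is correct and follows essentially the same route as the paper's proof: convexity (Lemma~\ref{convex2}) justifies reducing to the KKT system, complementary slackness yields the same three-way case split with thresholds $a_j p_1(1)q_1(j,1)$ and $a_j\sum_{n_1}n_1 p_1(n_1)q_1(j,n_1)$, and monotonicity of $\sum_j \bar b_j^{(1)}$ in $\nu$ pins down $\bar\nu$ via the capacity constraint. The only differences are cosmetic: the paper reaches the interior condition by solving for $\bar\omega_j$ and substituting into~\eqref{kkt6}, whereas you argue each slackness case directly, and you make the intermediate-value existence argument for $\bar\nu$ explicit where the paper leaves it implicit.
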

\begin{proof}
From~\eqref{kkt5},~\eqref{kkt6} and~\eqref{kkt7}, we have
\begin{equation}
\bar{\omega}_j = \bar{b}_j^{(1)} \left[a_j \sum_{n_1=0}^{\infty} n_1 (1-b_j^{(1)})^{n_1 -1} p_{1}(n_1) q_1(j, n_1) - \bar{\nu}\right], \label{omegaeq}
\end{equation}
which, when inserted into~\eqref{kkt6}, gives
\begin{equation}
\bar{b}_j^{(1)}(\bar{b}_j^{(1)} - 1)\left[a_j \sum_{n_1=0}^{\infty}n_1 (1-b_j^{(1)})^{n_1 -1} p_{1}(n_1) q_1(j,n_1)\right] = \bar{\nu} \label{star}.
\end{equation}
From~\eqref{star}, we see that $0 < \bar{b}_j^{(1)} < 1$ only if 
\begin{equation*}
\bar{\nu} = a_j \sum_{n_1=0}^{\infty}n_1 (1-b_j^{(1)})^{n_1 -1} p_{1}(n_1) q_1(j, n_1).
\end{equation*}
Since we know that $0 \leq b_j^{(i)} \leq 1$, this implies that
\begin{align*}
\bar{\nu} \in \left[a_j p_{1}(1) q_1(j, 1),\text{ } a_j \sum_{n_1=0}^{\infty} n_1 p_{1}(n_1) q_1(j,n_1) \right].
\end{align*}
If $\bar{\nu} < a_j p_{1}(1) q_1(j, 1)$, we have
\begin{equation*}
\bar{\omega}_j = \bar{\lambda}_j + a_j \sum_{n_1=0}^{\infty} n_1 (1-b_j^{(1)})^{n_1 -1} p_{1}(n_1) q_1(j, n_1) - \bar{\nu} > 0.
\end{equation*}
Thus, from~\eqref{kkt6}, we have $\bar{b}_j^{(1)} = 1$. Similarly, if $\bar{\nu} > a_j \sum_{n_1=0}^{\infty} n_1 p_{1}(n_1) q_1(j, n_1)$, we have
\begin{equation*}
\bar{\lambda}_j = \bar{\omega}_j +  \bar{\nu} -a_j \sum_{n_1=0}^{\infty}n_1 (1-b_j^{(1)})^{n_1 -1} p_{1}(n_1) q_1(j, n_1) > 0.
\end{equation*}
Hence, from~\eqref{kkt5}, we have $\bar{b}_j^{(1)} = 0$.

Finally, since $\sum_{j=1}^J {b}_j^{(1)}$ is a decreasing function in $\nu$, solving $J$ equations of~\eqref{eqsb} satisfying~\eqref{eqK} give the unique solution $\bar{\nu}$.
\end{proof}

\BSg{\subsection{Local Optimization Algorithm (LOA)}
\label{sec:loa}
The basic idea of our algorithm is to repeatedly perform local optimization. Let ${\tilde{\mathbf{b}}}^{(\ell)}$ denote the optimal the placement strategy for type$-\ell$ caches given by Theorem~\ref{optsol}.

As different types of caches share information with each other, the idea is to see if applying distributed optimization iteratively and updating the file placement strategies over different types of caches gives $\mathbf{\bar{b}}^{(\ell)}$ for all $\ell \in [1:L]$ yielding to the global optimum of Problem~\ref{orgprb}. To check this, we define the following algorithm.

For Local Optimization Algorithm (LOA), we update the caches following the sequence of the indices of the different types of caches. We assume that all types of caches are initially storing the most popular $K_\ell$ files depending on their cache capacities. The algorithm stops when $f^{(\ell)}(\mathbf{b}^{(\ell)})$ converged $\forall \ell \in \{1,\dots,L\}$, \ie a full round over all types of caches $\{1,\dots,L\}$ does not give an improvement in hit probability. LOA is shown in Algorithm~\ref{loa}.
\begin{figure}[!htb]
\removelatexerror
 \begin{algorithm}[H]
\label{loa}
   \caption{Local Optimization Algorithm}
   initialize $\mathbf{b}^{(\ell)} = [\underbrace{1, \dots, 1}_{\text{$K_\ell$ many}}, 0, \dots, 0]$, $\forall \ell \in \{1,\dots,L\}$\;
   set imp = 1\;
   \While {imp = 1}
   {
Set $imp = 0$\;
\For{$\ell = 1:L$}
{
      Solve Problem~\ref{modprb} for type$-\ell$ caches and find $\mathbf{\tilde{b}}^{(\ell)}$ using the information coming from other types of caches\;
      Compute $f^{(\ell)}\left(\mathbf{\tilde{b}^{(\ell)}}\right)$\;
      \If{$f_{\text{local}}^{(\ell)}\left(\mathbf{\tilde{b}^{(\ell)}}\right) - f_{\text{local}}^{(\ell)}\left(\mathbf{b^{(m)}}\right) \neq 0$}
	{
		$imp = 1$
	}
}
}
  \end{algorithm}
\end{figure}}

Next, we will present the placement strategies obtained by LOA for two deployment models we presented earlier.
\BS{
\subsection{LOA for PPP deployment model}
\label{sec:PPPmodel}
We already showed that we can reformulate Problem~\ref{orgprb} and find an analytical solution to the joint optimization problem when all types of caches are deployed according to homogeneous PPP models in Section~\ref{sec:jointoptimization}. In this section, we will follow the local optimization approach and find the optimal solution for different types of caches at every iteration step. The idea behind this is to see if our LOA converges to the optimal solution given in Theorem~\ref{thm:PPPstrategy}. We will give the numerical results in Section~\ref{sec:performance}.

Again, without loss of generality, we consider the local optimization of type$-1$. Since $P(\bm{\mathcal{N}}^{(-1)} = \bm{n}^{(-1)} \vert \mathcal{N}_1 = n_1) = p_{2}(n_2) \dots p_{L}(n_L)$ is independent of $n_1$, for the sake of simplicity we can define a new parameter: $q_1(j) =: q_1(j,n_1)$, $\forall n_1$.

\JG{The proof of the next result follows along the same lines as the proof of Theorem~\ref{optsol} and is skipped due to space constraints.}
\begin{thm}
\label{PPPopt}
LOA solution for of Problem~\ref{modprb} for PPP model is given by
\begin{align}
\label{bfuncPPPLOA}
\tilde{b}^{(1)}_j  = \left\{
\begin{array}{rl}
1, & \text{if } j < s_1,\\
\frac{1}{t_1}\log\frac{a_j t_1 q_1(j)}{\bar{\nu}}, &\text{if } s_1 \leq j \leq s_2,\\
0, & \text{if } j > s_2.
\end{array} \right.
\end{align}
where
\begin{align}
\label{gfuncPPPLOA}
g_j(\nu) = \left\{
\begin{array}{rl}
&1, \quad \text{if } \nu < a_j p_{1}(1) q_1(j),\\
&0, \quad \text{if } \nu > a_j \sum_{n_1=0}^{\infty}n_1 p_{1}(n_1) q_1(j),\\
&\frac{1}{t_1}\log\frac{a_j t_1 q_1(j)}{\nu}, \quad \text{otherwise},
\end{array} \right.
\end{align}
and $g: \mathbb{R} \rightarrow [0, K_1]$, where $g(\nu) = \sum_{j=1}^J g_j(\nu)$,
\begin{equation}
s_1 = \min\left\{1 \leq \ell \leq J \vert g\left(a_{\ell} t_1 e^{-t_1}q_1(\ell)\right) \geq K_1\right\}, \label{eqs1}
\end{equation}
\begin{equation}
s_2 = \max\left\{1 \leq \ell \leq J \middle\vert g\left(a_{\ell} t_1 q_1(\ell)\right) \leq K_1\right\}, \label{eqs2}
\end{equation}
and
\begin{align}
\bar{\nu} = \exp\Biggl\{\frac{\sum_{j = s_1}^{s_2} \log\left(a_j\right) - t_1\left(K_1 - s_1 + 1\right)}{s_2 - s_1 +1}+\log\left[t_1 q_1(j)\right]\Biggr\}. \label{LOAPPPnu}
\end{align}
\end{thm}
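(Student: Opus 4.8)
The plan is to specialize the KKT analysis of Theorem~\ref{optsol} to the PPP deployment model, where the only structural simplification is that $q_1(j,n_1)=q_1(j)$ is independent of $n_1$ (as noted just above) and $p_1(n_1)=t_1^{n_1}e^{-t_1}/n_1!$ is Poisson with parameter $t_1$. Because the three-regime characterization in~\eqref{bfunc}---fully cached, not cached, or interior---was derived purely from the complementary-slackness and stationarity conditions~\eqref{kkt5}--\eqref{kkt7}, without invoking any property of $p_1$ or $q_1$, that structure carries over verbatim. The one genuinely new ingredient is to evaluate the infinite sums appearing in~\eqref{kkt7} and~\eqref{eqsb} in closed form under the Poisson law; this is what turns the implicit thresholds of Theorem~\ref{optsol} into the explicit expressions in the statement.

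The central computation is the following. First factor $q_1(j)$ out of every sum, since it no longer depends on $n_1$. The remaining sum is the generating-function identity
\begin{equation*}
\sum_{n_1=0}^{\infty} n_1 (1-b_j^{(1)})^{n_1-1}\frac{t_1^{n_1}}{n_1!}e^{-t_1} = t_1\, e^{-t_1 b_j^{(1)}},
\end{equation*}
obtained by dropping the $n_1=0$ term and reindexing $m=n_1-1$. Hence the marginal quantity in~\eqref{kkt7} equals $a_j t_1 q_1(j)\,e^{-t_1 b_j^{(1)}}$. Evaluating it at the two endpoints reproduces the boundary thresholds: at $b_j^{(1)}=1$ it gives $a_j t_1 e^{-t_1} q_1(j)=a_j p_1(1) q_1(j)$, and at $b_j^{(1)}=0$ it gives $a_j t_1 q_1(j)=a_j\sum_{n_1=0}^{\infty} n_1 p_1(n_1) q_1(j)$, matching the case conditions in~\eqref{gfuncPPPLOA}. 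Solving the interior equation $a_j t_1 q_1(j)\,e^{-t_1 b_j^{(1)}}=\bar{\nu}$ from~\eqref{eqsb} for $b_j^{(1)}$ then yields $b_j^{(1)}=\frac{1}{t_1}\log\frac{a_j t_1 q_1(j)}{\bar{\nu}}$, the interior branch of the claimed solution.

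It remains to assemble the pieces and pin down $\bar{\nu}$, $s_1$ and $s_2$, following the proof of Theorem~\ref{thm:PPPoptsol}. I would define $g_j(\nu)$ as the clamping of the interior solution to $[0,1]$ and set $g(\nu)=\sum_{j=1}^J g_j(\nu)$; each $g_j$ is non-increasing in $\nu$ and strictly decreasing on its interior regime, so $g$ is monotone and the capacity constraint $g(\bar{\nu})=K_1$ determines $\bar{\nu}$ uniquely. The indices $s_1$ and $s_2$ in~\eqref{eqs1}--\eqref{eqs2} mark the files at which $g_j$ transitions out of the saturated value $1$ and into the value $0$, respectively; the files indexed below $s_1$ contribute $1$ each to $g$ and those above $s_2$ contribute $0$, so restricting $g(\bar{\nu})=K_1$ to the interior block $\{s_1,\dots,s_2\}$ and solving for $\bar{\nu}$ gives the closed form~\eqref{LOAPPPnu}, exactly paralleling the derivation of~\eqref{PPPnubar} (with the $q_1(j)$ factors entering additively in the exponent).

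The main obstacle is less the algebra than the bookkeeping around the regime partition. The Poisson sum is routine, and the interior inversion is immediate; the care lies in justifying that the fully-cached and not-cached files form contiguous index blocks delimited by $s_1$ and $s_2$. In the joint Theorem~\ref{thm:PPPoptsol} this was immediate from the monotonicity of $a_j$, whereas here the extra factor $q_1(j)$ enters the thresholds, so I expect to lean on the monotonicity of $g$ in $\nu$ (rather than on the raw popularities) to guarantee that $s_1$ and $s_2$ are well defined and that $\bar{\nu}$ is the unique root of $g(\bar{\nu})=K_1$.
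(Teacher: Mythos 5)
Your proposal is correct and is exactly the argument the paper intends: the paper skips this proof outright, saying it ``follows along the same lines as the proof of Theorem~\ref{optsol},'' and your specialization of that KKT analysis --- the Poisson identity $\sum_{n_1\geq 0} n_1(1-b_j^{(1)})^{n_1-1}p_1(n_1) = t_1 e^{-t_1 b_j^{(1)}}$, the endpoint evaluations giving the thresholds $a_j p_1(1)q_1(j)$ and $a_j t_1 q_1(j)$, the inversion yielding the interior branch of~\eqref{bfuncPPPLOA}, and the capacity constraint pinning down $\bar{\nu}$, $s_1$, $s_2$ as in~\eqref{LOAPPPnu} --- is precisely that omitted specialization. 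Your closing caveat about the contiguity of the saturated and empty index blocks (since $a_jq_1(j)$ need not be monotone in $j$) is a real subtlety that the paper's statement glosses over as well, so flagging it does not put you at odds with the paper's own level of rigor.
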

}

\subsection{LOA for M-or-None deployment model}
Again, without loss of generality, we first consider the local optimization of type$-1$. For M-or-None deployment model we will first analyze the behavior of the function $q_1(j, n_1)$. Using~\eqref{qfunc}, we have
\begin{equation}
q_1(j,n_1) = \sum_{n_2=0}^{M} \sum_{n_3=0}^{M - n_2} \dots \sum_{n_L = 0}^{M - (n_2 + \dots + n_{L-1})} \prod_{l = 2}^L (1-b_j^{(l)^c})^{n_l},
\end{equation}
which is not a function of $n_1$, but $M$. Then again for the sake of simplicity we define $q_1^M(j) =: q(j,n_1)$ for M-or-None deployment model. The rest of the analysis is the same as the one that is shown for PPP model.
\begin{thm}
\label{MorNoneopt}
LOA solution of Problem~\ref{modprb} for type$-1$ caches for M-or-None model is given by
\begin{align}
\label{bfuncMorNoneLOA}
\tilde{b}^{(1)}_j  = \left\{
\begin{array}{rl}
1, & \text{if } j < s_1\\
\frac{1}{t_1}\log\frac{a_j t_1 q_1^M(j)}{\bar{\nu}}, &\text{if } s_1 \leq j \leq s_2,\\
0, & \text{if } j > s_2.
\end{array} \right.
\end{align}
where
\begin{align}
\label{gfuncMorNoneLOA}
g_j(\nu) = \left\{
\begin{array}{rl}
&1, \quad \text{if } \nu < a_j p_{1}(1) q_1^M(j),\\
&0, \quad \text{if } \nu > a_j \sum_{n_1=0}^{\infty}n_1 p_{1}(n_1) q_1^M(j),\\
&\frac{1}{t_1}\log\frac{a_j t_1 q_1^M(j)}{\nu}, \quad \text{otherwise},
\end{array} \right.
\end{align}
and $g: \mathbb{R} \rightarrow [0, K_1]$, where $g(\nu) = \sum_{j=1}^J g_j(\nu)$,
\begin{equation}
s_1 = \min\left\{1 \leq \ell \leq J \vert g\left(a_{\ell} t_1 e^{-t_1}q_1^M(\ell)\right) \geq K_1\right\}, \label{eqs1mon}
\end{equation}
\begin{equation}
s_2 = \max\left\{1 \leq \ell \leq J \vert g\left(a_{\ell} t_1 q_1^M(\ell)\right) \leq K_1\right\}, \label{eqs2mon}
\end{equation}
and
\begin{align}
\bar{\nu} = \exp\Biggl\{\frac{\sum_{j = s_1}^{s_2} \log\left(a_j\right) - t_1\left(K_1 - s_1 + 1\right)}{s_2 - s_1 +1}+\log\left[t_1 q_1^M(j)\right]\Biggr\}. \label{LOAMorNonenu}
\end{align}
\end{thm}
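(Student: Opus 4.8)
The plan is to specialize the general local-optimization solution of Theorem~\ref{optsol} to the M-or-None model, mirroring the argument already used for the PPP case in Theorem~\ref{PPPopt}. The decisive structural observation, recorded just before the statement, is that $q_1(j,n_1)=q_1^M(j)$ does not depend on $n_1$: it is a function only of $M$ and of the fixed strategies $b_j^{(l)^c}$ of the other cache types. Because of this independence the analysis becomes formally identical to the PPP derivation, the only change being the replacement of $q_1(j)$ by $q_1^M(j)$ throughout. Hence I would invoke Theorem~\ref{optsol}, whose three-regime structure (set $\bar b_j^{(1)}=1$, set $\bar b_j^{(1)}=0$, or solve the stationarity equation~\eqref{eqsb}) already holds, and reduce the remaining work to evaluating the relevant Poisson sums in closed form.

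The key computational step is to substitute the Poisson pmf $p_1(n_1)=\frac{t_1^{n_1}}{n_1!}e^{-t_1}$ with $t_1=\lambda_1\pi r_1^2$ into the stationarity condition~\eqref{eqsb} and pull the $n_1$-independent factor $q_1^M(j)$ outside the sum. The inner series evaluates as
\begin{equation*}
\sum_{n_1=0}^{\infty} n_1 (1-b)^{n_1-1}\frac{t_1^{n_1}}{n_1!}e^{-t_1}=t_1 e^{-b t_1},
\end{equation*}
obtained by reindexing $m=n_1-1$ and recognizing the exponential series. The stationarity equation therefore collapses to $a_j t_1 q_1^M(j)\,e^{-b_j^{(1)} t_1}=\bar\nu$, which I solve explicitly for $b_j^{(1)}=\frac{1}{t_1}\log\frac{a_j t_1 q_1^M(j)}{\bar\nu}$, giving the middle branch of~\eqref{bfuncMorNoneLOA}. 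Evaluating the marginal value at the endpoints $b=1$ and $b=0$, where the series reduces to $t_1 e^{-t_1}=p_1(1)$ and to the Poisson mean $t_1$ respectively, yields the thresholds $\bar\nu<a_j p_1(1)q_1^M(j)$ and $\bar\nu>a_j t_1 q_1^M(j)$ that trigger the saturated ($b=1$) and empty ($b=0$) regimes.

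Next I would package the per-file optimizer as the function $g_j(\nu)$ in~\eqref{gfuncMorNoneLOA} and set $g(\nu)=\sum_j g_j(\nu)$. Since each $g_j$ is continuous and nonincreasing in $\nu$, so is $g$, and the capacity constraint~\eqref{eqK} rewritten as $g(\bar\nu)=K_1$ pins down a unique multiplier $\bar\nu$; this is the same monotonicity argument that closes Theorem~\ref{optsol}. The cut-off indices follow by evaluating $g$ at the two kink points: $s_1$ in~\eqref{eqs1mon} is the first index whose saturation threshold already forces $g\ge K_1$, and $s_2$ in~\eqref{eqs2mon} is the last index whose emptying threshold still allows $g\le K_1$. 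Restricting $g(\bar\nu)=K_1$ to the middle regime $s_1\le j\le s_2$, where files $j<s_1$ each contribute $1$ and files $j>s_2$ contribute $0$, and solving the resulting linear equation in $\log\bar\nu$ produces the closed form~\eqref{LOAMorNonenu}.

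The only genuinely new ingredient relative to Theorem~\ref{optsol} is the Poisson series identity $\sum_{n_1} n_1(1-b)^{n_1-1}p_1(n_1)=t_1 e^{-bt_1}$ together with the verification that $q_1^M(j)$ is truly independent of $n_1$; once these are in place the remaining steps are mechanical and identical to the PPP proof. I expect the main obstacle to be bookkeeping rather than conceptual: one must check that the endpoint limits of the series match $p_1(1)q_1^M(j)$ and $t_1 q_1^M(j)$ so that the three branches of~\eqref{gfuncMorNoneLOA} stitch together continuously, and that the residual capacity used in the middle regime is correctly reduced to $K_1-s_1+1$ to account for the $s_1-1$ saturated files, which is exactly the $t_1(K_1-s_1+1)$ term appearing in~\eqref{LOAMorNonenu}.
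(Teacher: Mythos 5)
Your proposal is correct and takes essentially the same route as the paper: the paper's proof of Theorem~\ref{MorNoneopt} is a one-line reduction stating that the argument is identical to the PPP case (Theorem~\ref{PPPopt}, itself a specialization of Theorem~\ref{optsol}) once $q_1(j)$ is replaced by $q_1^M(j)$, exploiting exactly the $n_1$-independence you identify. Your write-up simply fills in the Poisson-series evaluation $\sum_{n_1} n_1(1-b)^{n_1-1}p_1(n_1)=t_1 e^{-bt_1}$, the endpoint thresholds, and the monotonicity argument for $\bar\nu$ that the paper leaves implicit, all of which are the intended details.
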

\begin{proof}
Proof is the same as of PPP model with replacing $q_1(j)$ with $q_1^M(j)$. The only important thing here to note is that $q_1(j,n_1)$ is not a function of $n_1$ for both deployment models and constant for $M$. Thus, solution to~\eqref{eqsb} can be further exploited with some manipulations.
\end{proof}

For the helpers, the analysis is different. Again, without loss of generality, we will consider the local optimization of type$-2$ helpers. For M-or-None deployment model we will first analyze the behavior of the function $q_2(j, n_2)$, \ie the probability that non type$-2$ caches are missing file $j$. First for the sake of simplicity, we define a new parameter for the probability of other helpers missing file $j$ as
\begin{equation*}
\zeta_2(j,n_2) = \sum_{n_3=0}^{M - n_2} \dots \sum_{n_L = 0}^{M - \sum_{k = 2}^{L-1} n_k} \prod_{l = 3}^L \left(1-b_j^{(l)^c}\right)^{n_l}.
\end{equation*}
Then, using~\eqref{qfunc}, we have
\begin{align*}
q_2(j,n_2) &= \sum_{n_1 = 0}^\infty p_1(n_1) \left(1-b_j^{(1)^c}\right)^{n_1} \zeta_2(j,n_2)\\
&= e^{\left(1-b_j^{(1)^c}\right)} \zeta_2(j,n_2).
\end{align*}
$q_2(j,n_2) $ is now a function of $n_2$ and we can not manipulate Eq.~\eqref{eqsb} further to get a closed form solution for the helpers for M-or-None deployment model.

\JG{The proof of the next result follows along the same lines as the proofs of Theorems~\ref{optsol} and~\ref{MorNoneopt}  and is skipped due to space constraints.}
\begin{thm}
\label{MorNoneopt2}
LOA solution of Problem~\ref{modprb} for type$-2$ caches for M-or-None model is given by
\begin{align}
\label{bfuncMorNone2}
\tilde{b}^{(2)}_j = \left\{
\begin{array}{rl}
1, & \text{if } \bar{\nu} < a_j e^{\left(1-b_j^{(1)^c}\right)} \zeta_2(j,1)\\
0, & \text{if } \bar{\nu} > a_j \sum_{n_2=0}^{M}n_2 e^{\left(1-b_j^{(1)^c}\right)} \zeta_2(j,n_2),\\
\phi(\bar{\nu}), & \text{otherwise},
\end{array} \right.
\end{align}
where $\phi(\bar{\nu})$ is the solution over $b^{(2)}_j$ of
\begin{equation}
a_j \sum_{n_2 = 0}^{M}n_2(1-b_j^{(2)})^{n_2-1} e^{\left(1-b_j^{(1)^c}\right)} \zeta_2(j,n_2) = \bar{\nu},
\end{equation}
and $\bar{\nu}$ can be obtained as the unique solution to the additional constraint
\begin{equation}
\bar{b}_1^{(2)} + \dots + \bar{b}_J^{(2)} = K_2. 
\end{equation}
\end{thm}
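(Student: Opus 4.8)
The plan is to reuse the convexity-plus-KKT template of Theorem~\ref{optsol}, now applied to the type$-2$ local objective $f_{\text{local}}^{(2)}$ after the substitution $q_2(j,n_2)=e^{(1-b_j^{(1)^c})}\zeta_2(j,n_2)$ derived above. Writing the $j$th summand of $f_{\text{local}}^{(2)}$ as a nonnegative combination of the maps $b\mapsto(1-b)^{n_2}$, $0\le n_2\le M$, each of which is convex on $[0,1]$, the objective is separable and convex, exactly as in Lemma~\ref{convex2}; hence the KKT system, with multipliers $\bar\nu$, $\bar\lambda_j$, $\bar\omega_j$ for the capacity, lower-bound and upper-bound constraints, is necessary and sufficient for optimality.

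I would then introduce the marginal-benefit function
\[
\psi_j(b) = a_j \sum_{n_2=0}^{M} n_2 (1-b)^{n_2-1} e^{(1-b_j^{(1)^c})}\zeta_2(j,n_2),
\]
which is the negative of the derivative of the $j$th summand of $f_{\text{local}}^{(2)}$. Proceeding exactly as in~\eqref{omegaeq}--\eqref{star}, the stationarity condition becomes $\psi_j(\bar b_j^{(2)})=\bar\nu-\bar\lambda_j+\bar\omega_j$, and combining it with the complementary-slackness identities $\bar\lambda_j\bar b_j^{(2)}=0$ and $\bar\omega_j(\bar b_j^{(2)}-1)=0$ splits the solution into three regimes. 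On the interior, $0<\bar b_j^{(2)}<1$ forces $\bar\lambda_j=\bar\omega_j=0$ and hence $\psi_j(\bar b_j^{(2)})=\bar\nu$, which is precisely the equation defining $\phi(\bar\nu)$. The two thresholds come from evaluating the endpoints: $\psi_j(0)=a_j\sum_{n_2=0}^M n_2\,e^{(1-b_j^{(1)^c})}\zeta_2(j,n_2)$, while at $b=1$ only the $n_2=1$ term survives the convention $0^0=1$, giving $\psi_j(1)=a_j e^{(1-b_j^{(1)^c})}\zeta_2(j,1)$. A sign check on the multipliers then yields $\bar b_j^{(2)}=1$ when $\bar\nu<\psi_j(1)$ and $\bar b_j^{(2)}=0$ when $\bar\nu>\psi_j(0)$, reproducing the cases of~\eqref{bfuncMorNone2}.

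To finish, I would determine $\bar\nu$ from the capacity constraint. Since the $j$th summand is strictly convex in $b_j^{(2)}$ whenever some $n_2\ge 2$ carries positive weight, $\psi_j$ is strictly decreasing, so $\phi(\bar\nu)$ is the unique root of $\psi_j(b)=\bar\nu$ on $(\psi_j(1),\psi_j(0))$ and each $\bar b_j^{(2)}$ is non-increasing in $\bar\nu$. Therefore $\sum_j\bar b_j^{(2)}$ is monotone in $\nu$ and the equation $\bar b_1^{(2)}+\dots+\bar b_J^{(2)}=K_2$ admits a unique solution $\bar\nu$, which closes the argument.

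The step I expect to be the genuine obstacle --- and the only real departure from Theorems~\ref{optsol} and~\ref{MorNoneopt} --- is that $\zeta_2(j,n_2)$, and hence $q_2(j,n_2)$, depends on $n_2$, so unlike the PPP and M-or-None type$-1$ cases the interior equation $\psi_j(b)=\bar\nu$ does not invert to a closed form; one must instead characterize $\phi(\bar\nu)$ implicitly and lean on strict monotonicity for both its well-definedness and the uniqueness of $\bar\nu$. Some care is also needed at the boundary $b=1$, where the collapse of the $n_2\ge 2$ terms is handled through the $0^0=1$ convention, and in the degenerate case $M=1$, where $\psi_j$ is constant and the interior regime is empty.
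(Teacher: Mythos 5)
Your proposal is correct and matches the paper's intent exactly: the paper explicitly skips this proof, stating that it ``follows along the same lines as the proofs of Theorems~\ref{optsol} and~\ref{MorNoneopt},'' and your argument is precisely that KKT template (convexity via Lemma~\ref{convex2}, the three complementary-slackness regimes, and $\bar{\nu}$ pinned down by monotonicity of $\sum_j \bar{b}_j^{(2)}$) applied with $q_2(j,n_2)=e^{(1-b_j^{(1)^c})}\zeta_2(j,n_2)$. Your closing observation --- that the $n_2$-dependence of $\zeta_2(j,n_2)$ is exactly what blocks the closed-form inversion available in Theorems~\ref{PPPopt} and~\ref{MorNoneopt}, forcing the implicit characterization $\phi(\bar{\nu})$ --- is the same point the paper makes when introducing this theorem.
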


The solution for other helper types can be obtained by following the same procedure by replacing $\zeta_2(j,n_2)$ by $\zeta_\ell(j,n_\ell)$ for type$-\ell$ caches.

\section{Performance Evaluation}
\label{sec:performance}
In this section, first we will present some heuristic placement strategies. Next we will specify different network coverage models and show the performances of the proposed algorithms .

\subsection{Heuristics}
In this subsection we will introduce some heuristic placement strategies. The main aim of proposing these heuristics is to compare the hit probability performance of the system when the optimal strategy is used with the hit probability obtained when these heuristics are used. Later we will show by numerical results that the hit probability is increased remarkably by using the placement strategies that our proposed algorithms give compared to these heuristics.
\subsubsection{Heuristic 1 (H1)}
\label{policyh1}
The first heuristic is to use is to store the first $K_i$ most popular files in type-$i$ caches, denoted by H1. For H1, 
$$
\bm{{b}}^{(i)} = \left(\underbrace{1, 1, \dots, 1}_\text{$K_i$ many}, 0, \dots, 0\right).
$$ 

\subsubsection{Heuristic 2 (H2)}
\label{policyh2}
We will introduce an example to explain how H2 works. In some scenarios type-$1$ caches may store the first $K_1$ files with high probabilities. Then, it is wiser to come up with a smarter heuristic than H1 since the first $K_1$ files are already available for the users covered by type$-1$ caches. Hence, the second heuristic we propose suggests not to store the most popular first $K_1$ files in type-$2$ caches, and store the next $K_2$ popular files with probability $1$, and continue with the same procedure for type-$3$ caches and so on. The second heuristic is called Heuristic 2 (H2). For H2, 
$$
\bm{{b}}^{(2)} = \left(\underbrace{0, 0, \dots, 0}_\text{$K_1$ many}, \underbrace{1, \dots, 1}_\text{$K_2$ many}, 0, \dots, 0\right),
$$
and 
$$
\bm{{b}}^{(3)} = \left(\underbrace{0, 0, \dots, 0}_\text{$K_1 + K_2$ many}, \underbrace{1, \dots, 1}_\text{$K_3$ many}, 0, \dots, 0\right),
$$
and so on.
\subsubsection{Heuristic 3 (H3)}
\label{policyh3}
We will introduce a smarter deployment heuristic here that also takes the deployment densities of the different types of caches into account. Suppose there are type-$1$ caches in the plane with density $\lambda_1$ and type-$2$ caches are to be deployed in the plane with density $\lambda_2$. Then, we store the first $K_2\lceil \frac{\lambda_2\ r_2^2}{\lambda_1 r_1^2}\rceil$ files with probability $\frac{1}{\lceil \frac{\lambda_2 r_2^2}{\lambda_1 r_1^2}\rceil}$. Namely, for H3,
$$
\bm{{b}}^{(2)} = \left(\underbrace{\frac{1}{\lceil \frac{\lambda_2 r_2^2}{\lambda_1 r_1^2}\rceil}, \frac{1}{\lceil \frac{\lambda_2 r_2^2}{\lambda_1 r_1^2}\rceil}, \dots, \frac{1}{\lceil \frac{\lambda_2 r_2^2}{\lambda_1 r_1^2}\rceil}}_\text{$K_2\lceil \frac{\lambda_2 r_2^2}{\lambda_1 r_1^2}\rceil$ many}, 0, \dots, 0\right). 
$$

\subsection{Poisson Point Process (PPP) deployment model}

\BSg{In this subsection we will consider various scenarios for the case where different types of caches are all following homogeneous PPP. First, we will show the hit probability evolution for the case where helpers (SBSs) with different coverage radii (Femtocells have a typical coverage radius of $10$ m, picocells have $150$ m, and macrocells have $1-2$ km in rural areas~\cite{mobnetguide}.) and different cache capacities. We will illustrate how optimal placement policies behave for LOA, and compare LOA performance with the joint solution and the heuristics. Furthermore, we will show that LOA indeed gives the optimal solution by comparing it with the solution of the joint problem which has been proven to be optimal for homogeneous PPP. 
Also, we will provide numerical results for the case where the file popularities follow distributions with different Zipf parameters and where there is incomplete information on file popularities.}

\subsubsection{Files with popularities following the Zipf distribution}

\BS{
First, we will present a scenario where the SBSs have different coverage radius, $r_2$. Consider the case of two types of caches in the plane. Type$-1$ caches represent MBSs and type$-2$ caches represent SBSs, with $K_1$ and $K_2$-slot cache memories, respectively. The content library size is $J = 1000$. We set $K_1 = 10$ (1\% of the total library) and $K_2$ will have different values, \ie 10, 20, 50, 100 (1, 2, 5 and 10\% of the total library size, respectively.). We set the Zipf parameter $\gamma = 1$ and taking $a_j$ according to~\eqref{zipfpars}. Also we set $\lambda_1 = 1.8324 \times 10^{-5}$, $r_1 = 700$ m and $\lambda_2 = 2\lambda_1$. As a side note, we have chosen the intensity of the Poisson process of the MBSs (or type$-1$ caches) equal to the density of base stations in the real network considered in~\cite{openmobilenetwork, diststorage, sigmetrics}.

\BSg{In \JG{Figure~\ref{hitprobevoradii}} we see that the hit probability increases as the coverage radius for the SBSs increases. If the SBSs are femtocells ($r_2 = 10$ m), increasing the cache capacity will not give a significant improvement in hit probability. However, for picocells ($r_2 = 150$ m), having a larger cache memory significantly increases the hit probability.}

\begin{figure}
\centering
\subfloat[\label{hitprobevoradii}]{
\includegraphics[width=0.45\linewidth]{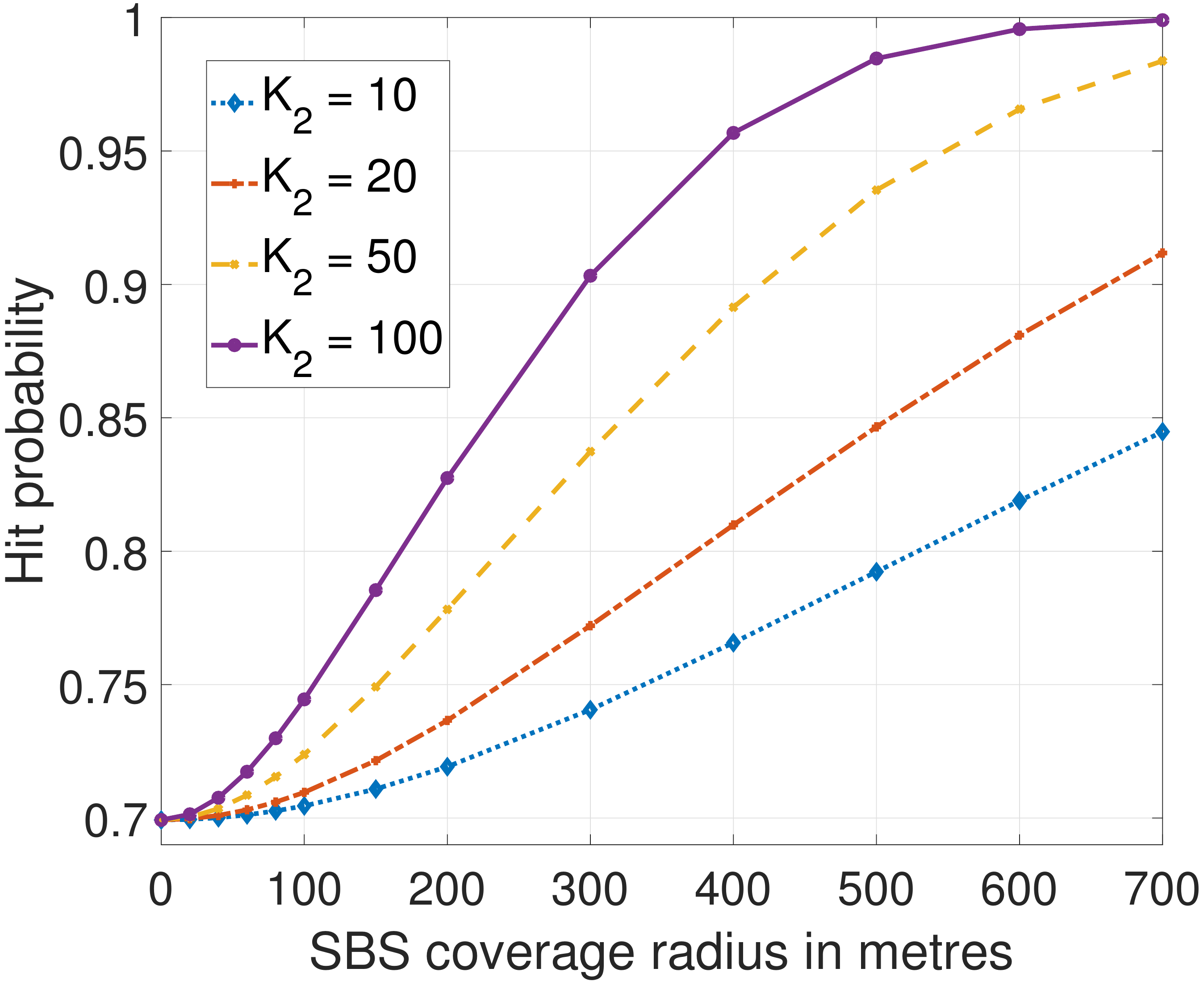}}
\hspace{0.1cm}
\subfloat[\label{hitprobevo}]{
\includegraphics[width=0.42\linewidth]{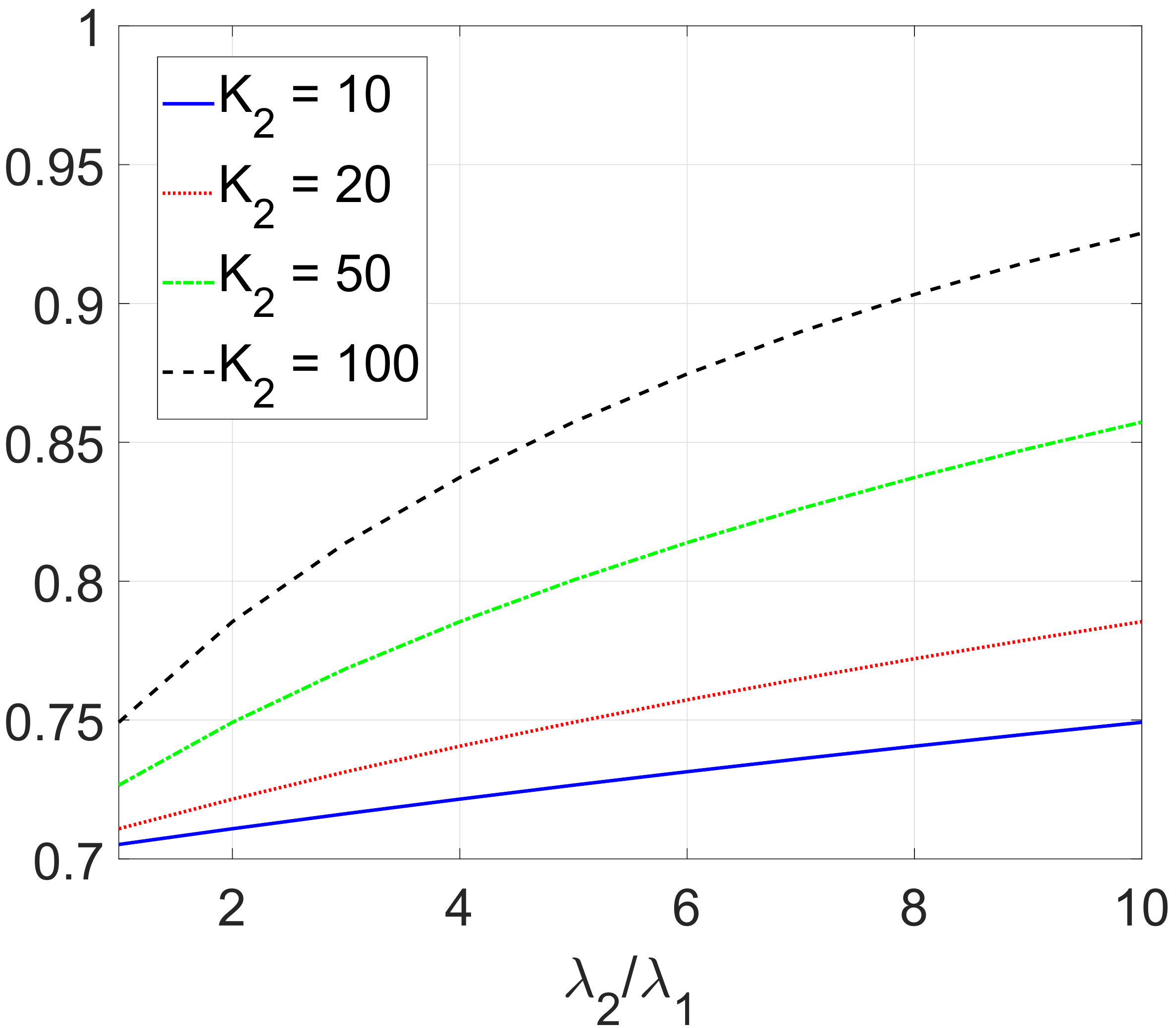}}
\caption{(a) The total hit probability evolution for different SBS coverage radii. (b) The total hit probability evolution for different SBS deployment densities ($r_2 = 150$ m).}
\end{figure}

\BSg{Next, we will present a scenario where the SBSs have a fixed coverage radius and different deployment densities with different cache memories. We assume that our SBSs are picocells. We consider the same parameters in the previous setting, except we set $r_2 = 150 m$ and we will consider various values for the deployment densities for SBSs, namely for $\lambda_2$.}

\BSg{In \JG{Figure~\ref{hitprobevo}} we see the hit probability evolution for the different deployment densities of the SBSs with different cache capacities. Let us consider the two curves where $K_2 = 10$ and $K_2 = 20$. It can be easily verified that the hit probability is equal for the cases when the SBSs have cache memory of $K_2 = 10$ at $\lambda_2/\lambda_1 = 2$ and of $K_2 = 20$  at $\lambda_2/\lambda_1 = 1$. Note that this holds for any ratio, and confirms the validity of the analytical results (for instance, the relation can be seen in Eq.~\eqref{PPPmodconstraints1}.).}

\BSg{Next, we will show how LOA works for the PPP model. Since we can already obtain the optimal solution for the joint problem for this model, we know what the optimal hit probability is. Therefore, we would like to give an insight on how LOA works and performs by comparing it with the joint solution. We consider the same parameters in the previous setting, except we set the content library size $J = 100$ in order to effectively show the difference between the optimal placement probabilities obtained via the joint solution and LOA.}

\begin{figure}
\centering
\includegraphics[width=0.6\columnwidth]{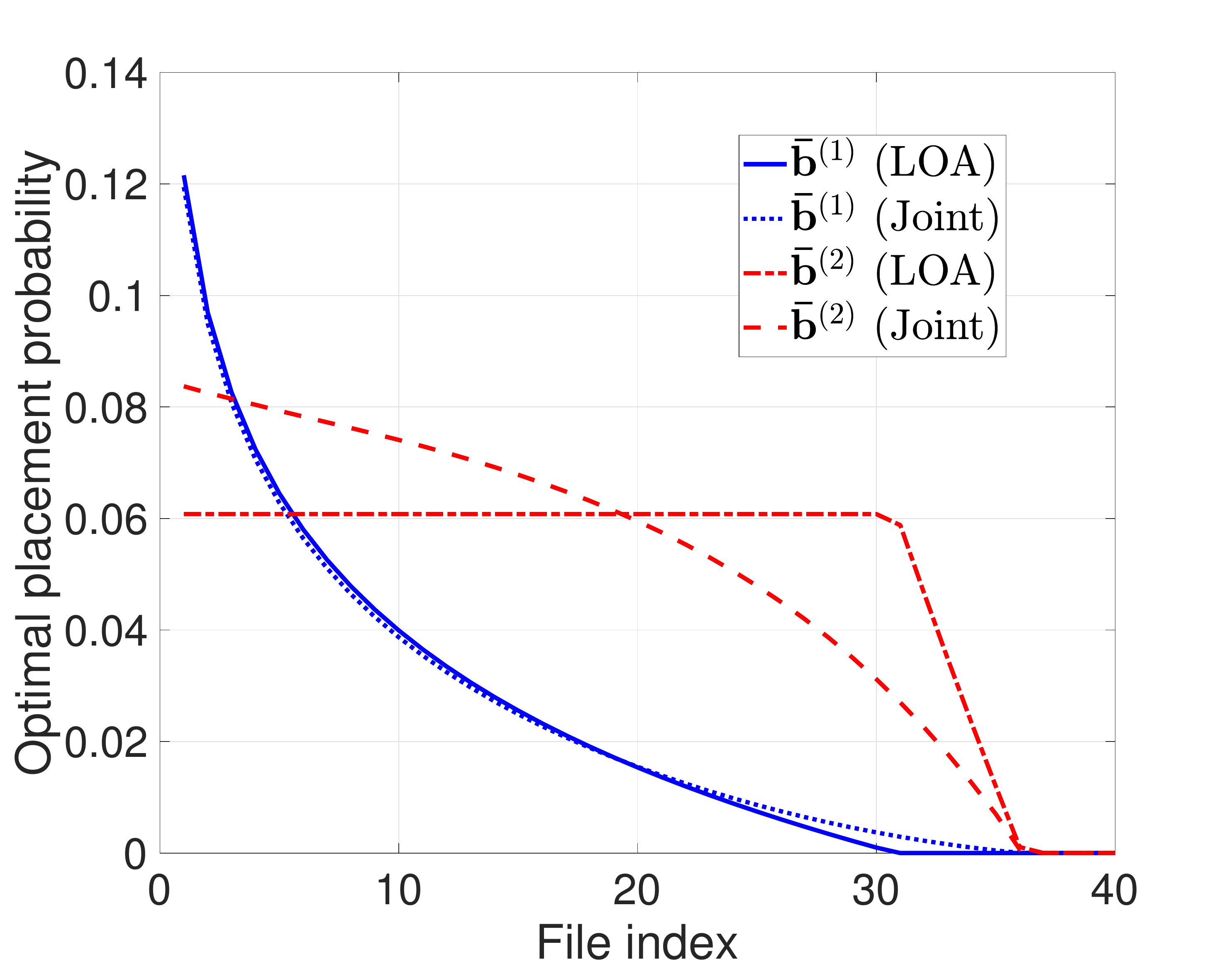}
\caption{Resulting placement strategies obtained via the joint solution and LOA.}
\label{JointVsLoaPlacement}
\end{figure}

In Figure~\ref{JointVsLoaPlacement} we see that the resulting placement strategies for MBSs and SBSs differ for the solution of the joint problem and LOA. For LOA solution, first we find the optimal solution for MBSs assuming that there are no SBSs in the plane. Solving this optimization problem gives the blue straight curve $\bm{\bar{b}}^{(1)}$ (LOA) in Figure~\ref{JointVsLoaPlacement}. Then we set $\bm{b}^{(1)^c} = \bm{\bar{b}}^{(1)}$ (LOA), take it as an input, add SBSs into the plane and find the optimal placement strategy $\bm{\bar{b}}^{(2)}$ (LOA) for SBSs. Solving this optimization problem gives the red dashed curve $\bm{\bar{b}}^{(2)}$ (LOA) in Figure~\ref{JointVsLoaPlacement}. The solution of the joint problem gives the blue dotted curve $\bm{\bar{b}}^{(1)}$ (Joint) and the red dashed dotted red curve $\bm{\bar{b}}^{(2)}$ (Joint), for the MBSs and SBSs, respectively. Computing the hit probabilities for both the joint solution and LOA gives $ f\left(\bm{\bar{b}} \text{ (Joint)}\right) = f\left(\bm{\bar{b}} \text{ (LOA)}\right) = 0.6125$.

\BSg{Now let us pick a file index and verify the validity of Theorem~\ref{thm:PPPstrategy} numerically. Suppose we pick the file $c_1$. $\bar{b}_1^{(1)} \text{ (Joint)} = 0.1195$,  $\bar{b}_1^{(1)} \text{ (LOA)} = 0.1216$, $\bar{b}_1^{(2)} \text{ (Joint)} = 0.0837$ and $\bar{b}_1^{(2)} \text{ (LOA)} = 0.0608$. It is easy to verify that \eqref{midconstr} holds both for the joint problem and LOA. In fact, \eqref{midconstr} holds for any file index and we conclude that LOA gives the optimal solution for this specific scenario even though LOA's resulting placement probabilities are different than the optimal solution of the joint problem. 

This validates our claim in Theorem~\ref{thm:PPPstrategy} numerically, namely $d_j$'s given in Theorem~\ref{thm:PPPoptsol} is unique, and $b_j$'s can be obtained greedily by using LOA, satisfying the unique solution given by Theorem~\ref{thm:PPPstrategy}. 

We would like to briefly give the intuition behind LOA. The algorithm starts with optimizing type$-1$ caches. This solution has already proven to be optimal when there are no other types of caches in the network. On the other hand, adding other types of caches can only help type$-1$ caches, and can not cause any harm to the optimal strategy that had already been obtained. The structure of Theorem~\ref{thm:PPPstrategy} shows that the contribution for the coverage per file per any type of cache comes with two important parameters: (a) the density of the caches and (b) the probability of storing the file. The cumulative contribution of different types of caches will then give the ultimate hit probability. Therefore, just as in Theorem~\ref{PPPopt}, starting with the worst case strategy (no information coming from other types for type$-1$) and updating all types of caches sequentially by providing more information to the next type at each step is simply equivalent to the optimal strategy for the joint problem as provided in~Theorem~\ref{thm:PPPstrategy}. As a conclusion, LOA is a distributed algorithm that gives the optimal placement strategy when caches are deployed according to PPP.
 }

\JG{Numerical results  for different Zipf parameters are consistent with the presented results.
}

Our next aim is to compare the optimal placement strategy with various heuristics. We use the same simulation parameters as in the above cases with the content library size $J = 100$.

\begin{figure}
\centering
\includegraphics[width=0.6\columnwidth]{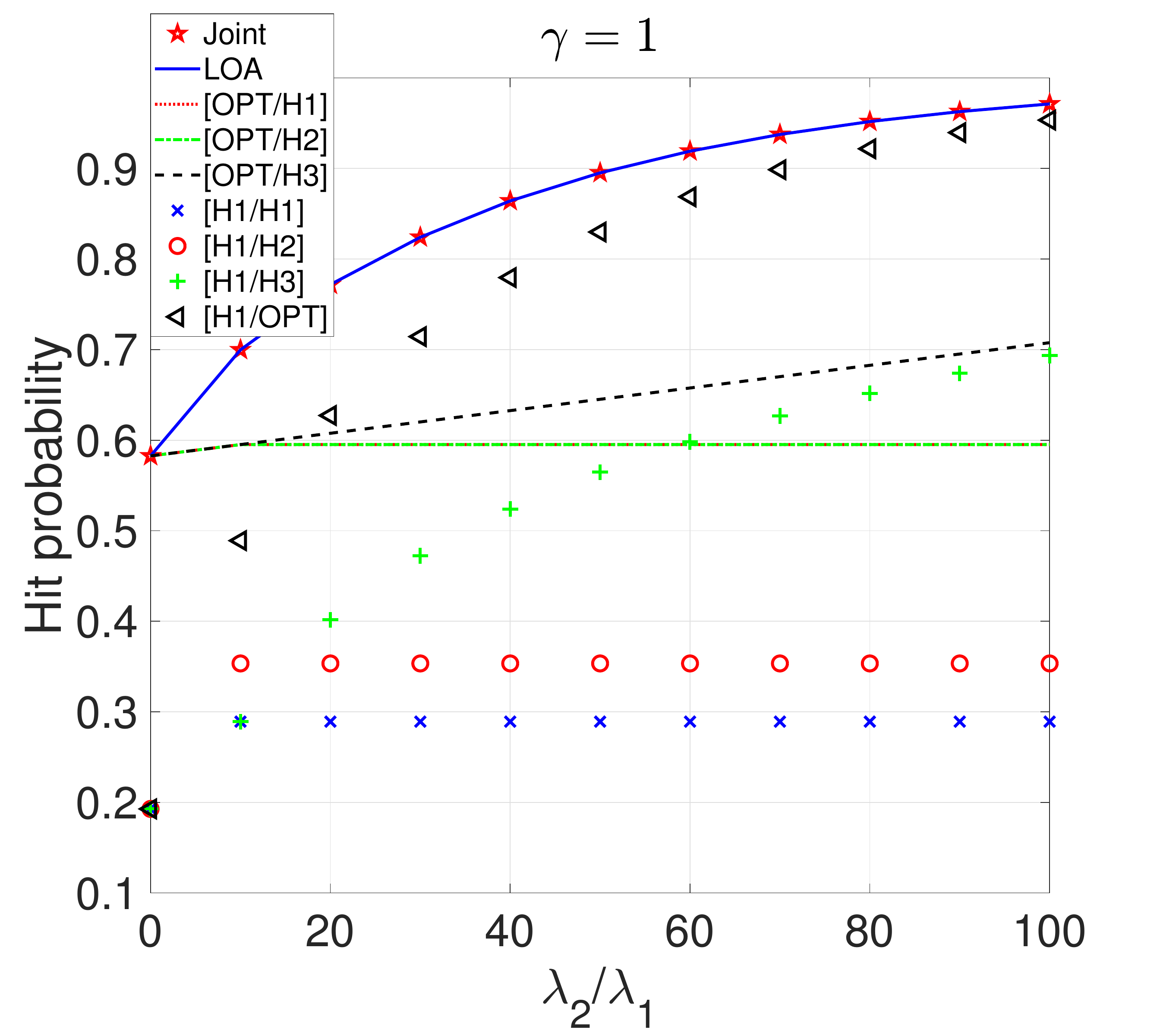}
\caption{Hit probability evolution for various heuristics.}
\label{gamma1}
\end{figure}

\BSg{In Figure~\ref{gamma1}, we see the hit probability evolution for various heuristics. We observe that:
\begin{itemize}
\item LOA is optimal since it performs equally well as the joint solution.
\item As $\lambda_2 \pi r_2^2 >> \lambda_1 \pi r_1^2$, using H3 for the deployment of SBSs gives an improvement in hit probability as long as MBSs are using the optimal placement strategy [OPT] -which is simply the first iteration step of LOA, \ie finding the optimal solution for MBSs assuming that there are no SBSs in the plane.-. However, it requires an unrealistically dense deployment of the SBSs in order to reach the optimal performance.
\item Applying first step of LOA to MBSs and using H1 and H2 for SBSs results in significant performance penalties.
\item Using H1 for the MBSs and applying the second step of LOA to SBSs, \ie taking H1 strategy for the MBSs fixed, and finding the optimal strategy for SBSs, performs significantly well and converges to optimal performance as $\lambda_2 \pi r_2^2 >> \lambda_1 \pi r_1^2$.
\item Applying heuristics to both MBSs and SBSs results in significant performance penalties.
\item Running LOA iteratively does not improve the hit probability, namely running one round over each type of caches gives the optimal solution.
\end{itemize}}

\BSg{\subsubsection{Incomplete information on file popularities}

We use the same notation in the previous subsection. The content library size is $J = 100$. We set $K_1 = 1$ and $K_2 = 2$. We set $\gamma = 1$, $a_j$ takes the values from~\eqref{zipfpars} and $a_j^{\text{pert}}$ takes the values from~\eqref{zipfnoisy} by adjusting $\sigma_j^2$ such that the signal-to-noise ratio ($SNR$) between $a_j$ and $\sigma_j^2$ is set accordingly, \ie the signal power is equal to $POW_{a_j} = 10\log_{10}\left(\vert a_j \vert ^2\right)$, and the noise power is equal to $POW_{\sigma_j^2} = POW_{a_j} - SNR$ in dB. We set $\lambda_1 = 0.5$ and $r_1 = r_2 = 1$.

\begin{figure}
\centering
\input{figures/Figure6.tex}
\label{PPPSmartPerturb}
\end{figure}

As proposed earlier, in real life most of the time the file popularities will not follow a smooth distribution as the Zipf distribution. Recalling from the model that $a_j^{\text{pert}}$ values are the actual file popularity values that can not be obtained in real-time. We have the approximated $a_j$ values available and difference between the available popularity values $a_j$ and the actual file popularity values $a_j^{\text{pert}}$ increases as $\sigma_j^2$ increases. In Figure~\ref{PPPSmartPerturb} we show the total hit probability evolution for LOA. Straight lines indicate the ideal maximum hit probability that could be reached if the optimal deployment strategy was found by using $a_j^{\text{pert}}$ values. Dashed lines show the hit probability when the system is optimized with the already available $a_j$ values. It is \JG{not surprising} that the difference between the ideal and actual hit probability decreases as $\sigma_j^2$ decreases. We see a similar behavior under the [H1/OPT] strategy in \JG{our simulations.}
 }

\BSg{\subsection{M-or-None deployment model}

In this subsection we will present the performance evaluation of the placement strategies for the files following the Zipf distribution for M-or-None deployment model.

\subsubsection{Files with popularities following the Zipf distribution}
We use the same notation in the previous subsection. The content library size is $J = 100$. We set $K_1 = 1$, and $K_2 = 5$. We set $\gamma = 1$ and taking $a_j$ according to~\eqref{zipfpars}. Also we set $\lambda_1 = 1.8324 \times 10^{-5}$ and $r_1 = 700 m$.

For the first step of LOA, the optimal placement strategy for MBSs is $\bm{\bar{b}}^{(1)} = \left(0.1220, 0.0973,\right.\\\left.0.0829, 0.0727, \dots \right)$, and the resulting hit probability is $ f\left(\bm{\bar{b}}^{(1)}\right) = 0.5875$. Next, we solve the problem for Type-2 caches.
\begin{figure}
\centering
\subfloat[\label{cachingpolicyMorNoneSmart}]{
\includegraphics[width=0.45\linewidth]{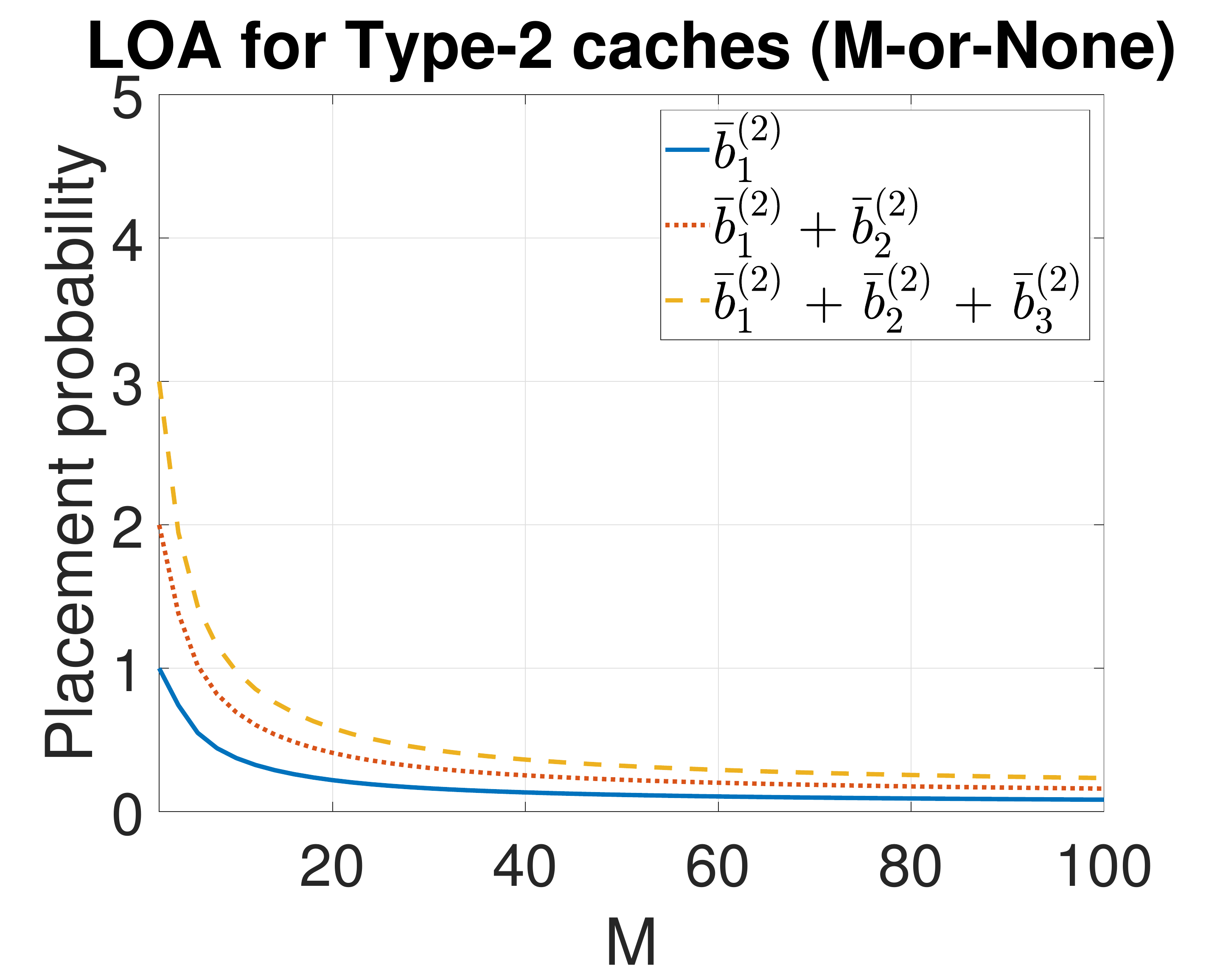}}
\hspace{0.1cm}
\subfloat[\label{cachingpolicyMorNoneMostpop}]{
\includegraphics[width=0.43\linewidth]{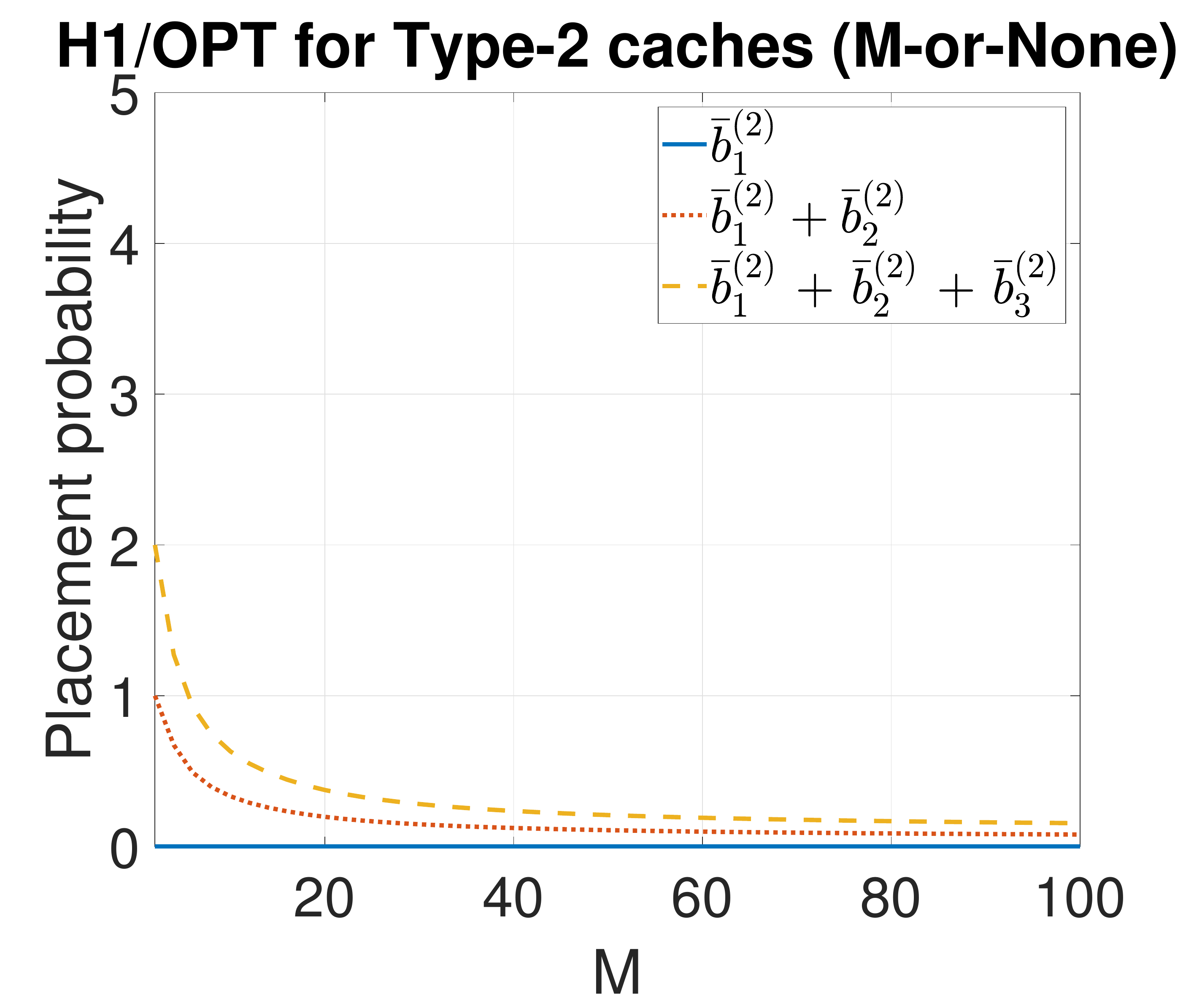}}
\caption{(a) Optimal placement strategy $\bm{\bar{b}}^{(2)}$ for SBSs for different $M$ values [LOA] (M-or-None). (b) Optimal placement strategy $\bm{\bar{b}}^{(2)}$ for SBSs for different $M$ values [H1/OPT] (M-or-None).}
\end{figure}
From Figure~\ref{cachingpolicyMorNoneSmart}, we see that the probability of storing less popular files increases as $M$ increases. As in the PPP case, repeatedly updating $\bm{\bar{b}}^{(1)}$ and $\bm{\bar{b}}^{(2)}$, does not improve the hit probability. We could not come up with an analytical solution to the joint problem of this deployment model since it is not convex, however from the result we obtained from the PPP model, it is very likely that LOA algorithm performs quite well for M-or-None deployment model as well.

For the [H1/OPT] scenario, we have $\bm{\bar{b}}^{(1)} = \left(1, 0, \dots, 0\right)$ and the resulting hit probability is $ f\left(\bm{\bar{b}}^{(1)}\right) = 0.2040$. Then we set $\bm{b}^{(1)^c} = \bm{\bar{b}}^{(1)}$, take it as a fixed input, and find the optimal placement strategy $\bm{\bar{b}}^{(2)}$ for SBSs.

\begin{figure}
\centering
\includegraphics[width=0.6\columnwidth]{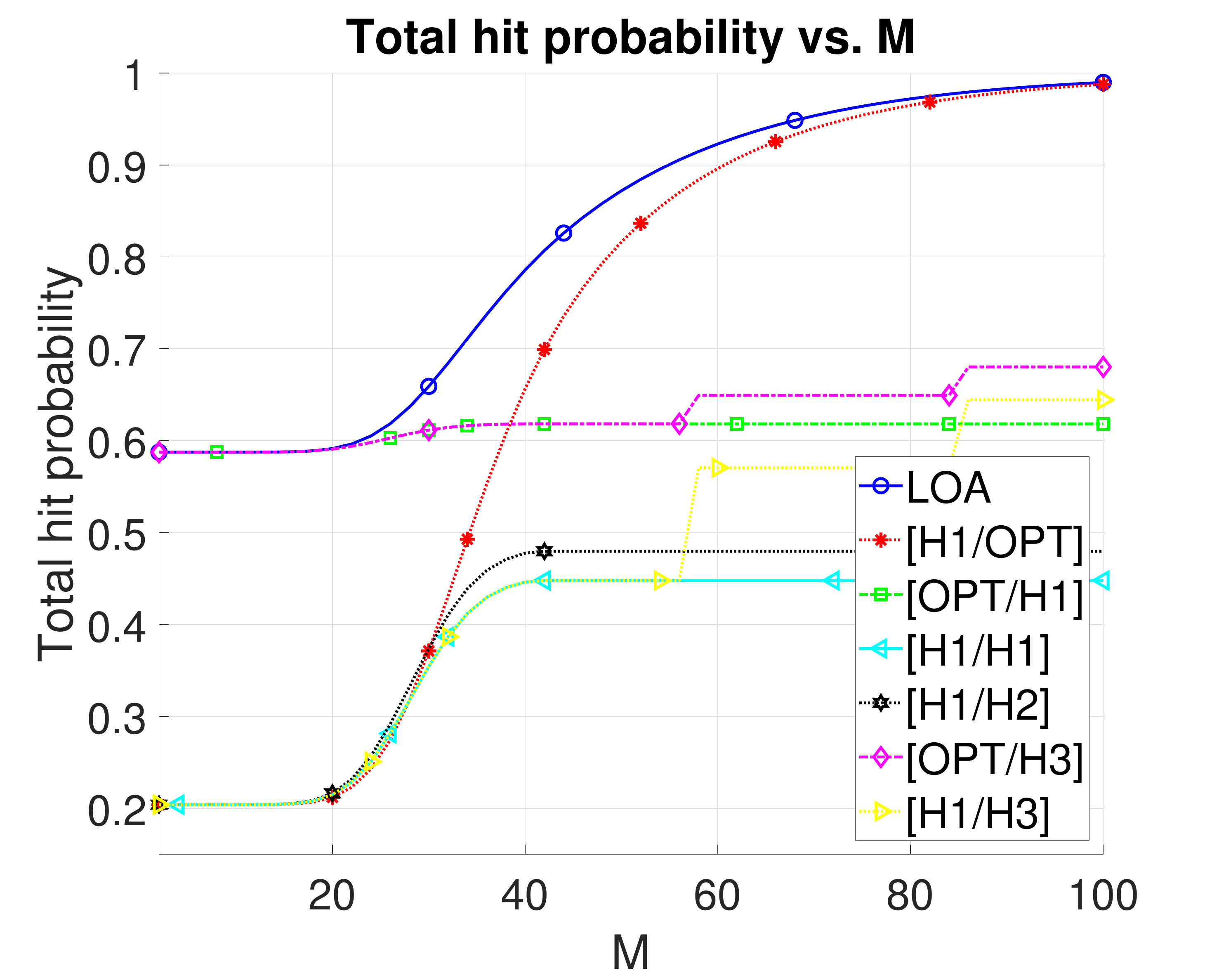}
\caption{The total hit probability evolution for different $M$ values (M-or-None).}
\label{hitprobMorNone}
\end{figure}
From Figure~\ref{cachingpolicyMorNoneMostpop}, as $c_1$ is stored in MBSs with probability $1$ and SBSs are present in the system only when $n_1 > 0$, $c_1$ is never stored at SBSs. We see that probability of storing $c_2$ and $c_3$ decreases and probability of storing other files increases as $M$ increases.

From Figure~\ref{hitprobMorNone}, we see that the hit probabilities under LOA and [H1/OPT] become identical and increase as $M$ increases, \ie we can use a heuristic placement policy for MBSs as long as we compensate the penalty by optimizing SBSs. The hit probability remains constant for until some point as $M$ increases due to the nature of the M-or-None model. Until the point where the hit probability starts increasing, only type-1 caches are present in the system (note that $\lambda_1 \pi r_1^2 \approx 28.2$). For heuristic SBS deployment policies, [OPT/H1], [H1/H1] and [H1/H2] policies achieve significantly lower hit probability than the optimal policy. We use a small variant of H3 here (since we do not have the density ratios for H3, we used the ratio of $\frac{M - \lambda_1 \pi r_1^2}{\lambda_1 \pi r_1^2}$.) [OPT/H3] and [H1/H3] achieve a higher probability compared to other heuristics, however optimizing the SBSs still gives a much higher hit probability.}

\BSg{We see a similar behavior in M-or-None deployment model for the incomplete information on file popularities as in PPP case, and we skip the illustration due to space constraints.}

}

\section{Discussion and Conclusion}
\label{discussion}
\BSg{In this paper we have shown that whether MBSs use the optimal deployment strategy or store ``the most popular content", has very limited impact on the total hit probability if the SBSs are using the optimal deployment strategy when the deployment densities of the SBSs are much larger than the MBSs'. Namely, when MBSs do not use the optimal placement strategy, it is possible to compensate this performance penalty, \ie it is important to optimize the content placement strategy of the SBSs and the total hit probability is increased significantly when the SBSs use the optimal deployment strategy. 

\BSg{For the PPP model, we have defined a new parameter for the probability of storing a file times the deployment density for an individual cache type, and show that the solution is unique for the sum of these new parameters over all cache types. We have shown that the relation between the newly presented parameter and the optimal placement probabilities follow a capacitated transportation problem and we show that the optimal placement probabilities can be obtained greedily. Consecutively, one has the flexibility of choosing the optimal placement strategies of the different types of caches as long as some certain capacity constraint is satisfied.} 

It is shown that heuristic policies for SBSs like storing the popular content that is not yet available in the MBS results in significant performance penalties. We have also proposed \JG{a heuristic} that takes deployment densities of different types of caches into account. We have shown that even though this heuristic gives a better hit probability performance compared to other heuristics, using optimal placement strategy still gives a better hit probability. 

To conclude, using the optimal deployment strategy for the SBSs (typically SBSs have the higher deployment density) is crucial and it ensures the overall network to have the greatest possible total hit probability independent of the deployment policy of MBSs. We have shown that solving the individual problem to find optimal placement strategy for different types of base stations iteratively, namely repeatedly updating the placement strategies of the different types, does not improve the hit probability. Finally, we have shown numerically that LOA gives the same hit probability as the optimal placement strategy of the joint optimization problem of the PPP model by running a single cycle over different types.}

\end{document}